\theoremstyle{plain}
\newtheorem{prop}{Proposition}
\theoremstyle{remark}
\newtheorem{rmk}{Remark}
\newtheorem{ex}{Example}
\renewcommand{\H}{\mathcal{H}}
\renewcommand{\S}{\mathcal{S}}
\newcommand{\A}{\mathcal{A}}
\renewcommand{\P}{\mathcal{P}}
\newcommand{\U}{\mathcal{U}}
\newcommand{\p}{\mathfrak{p}}
\newcommand{\EE}{\mathbb{E}}
\newcommand{\tr}{\operatorname{tr}}
\renewcommand{\d}{\mathrm{d}}
\newcommand{\ds}{\mathrm{d}s}
\newcommand{\dist}{\operatorname{dist}}
\newcommand{\1}{\mathbbm{1}}
\newcommand{\bra}[1]{\langle #1|}
\newcommand{\ket}[1]{|#1\rangle}
\newcommand{\braket}[2]{\langle #1 | #2 \rangle}
\newcommand{\ketbra}[2]{|#1 \rangle\langle #2|}
\newcommand{\Log}{\operatorname{Log}}
\newcommand{\linspan}{\text{sp}}
\newcommand{\taumin}{\tau_{\text{pas}}}
\newcommand{\tauqsl}{\tau_{\text{qsl}}}
\newcommand{\tauamin}{\tau_{\text{apas}}}
\newcommand{\tauaqsl}{\tau_{\text{aqsl}}}
\newcommand{\tauNqsl}{\tau_{\text{cqsl}}}
\newcommand{\tauNpas}{\tau_{\text{cpas}}}
\begin{document}
\title{Time-optimal quantum transformations with bounded bandwidth}
\author{Dan Allan}
\author{Niklas H{\"o}rnedal}
\author{Ole Andersson}
\email{ole.andersson@fysik.su.se}
\affiliation{Fysikum, Stockholms universitet, 106 91 Stockholm, Sweden}

\begin{abstract}
In this paper, we derive sharp lower bounds, also known as quantum speed limits, for the time it takes to transform a quantum system into a state such that an observable assumes its lowest average value. We assume that the system is initially in an incoherent state relative to the observable and that the state evolves according to a von Neumann equation with a Hamiltonian whose bandwidth is uniformly bounded. The transformation time depends intricately on the observable's and the initial state's eigenvalue spectrum and the relative constellation of the associated eigenspaces. The problem of finding quantum speed limits consequently divides into different cases requiring different strategies. We derive quantum speed limits in a large number of cases, and we simultaneously develop a method to break down complex cases into manageable ones. The derivations involve both combinatorial and differential geometric techniques. We also study multipartite systems and show that allowing correlations between the parts can speed up the transformation time. In a final section, we use the quantum speed limits to obtain upper bounds on the power with which energy can be extracted from quantum batteries.
\end{abstract}

\maketitle

\section{Introduction}
A quantum speed limit (QSL) is a lower bound for the time it takes to transform a quantum state in a certain way under some given conditions. Many QSLs have been derived for both open and closed systems; see \cite{Fr2016, DeCa2017} and the references therein. Several of them are valid under very general conditions and can therefore be applied to virtually any system \cite{TaEsDadMF2013,dCaEgPlHu2013,ZhHaXiCaFa2014,DeLu2014,PiCiCeAdSo-Pi2016}. Extensive applicability is indeed a strength of a QSL but also means that the QSL can give a rather weak time-bound in specific~cases.

In this paper, we take a different approach to deriving QSLs for a family of quantum systems broad enough to include many systems of both practical and theoretical interest but narrow enough for the QSLs to be very sharp. Specifically, we consider a general finite-dimensional system prepared in a state that commutes with a definite but otherwise unspecified observable. For such a system, we examine how long it takes to unitarily transform the state into one where the observable's expectation value is minimal. We allow the Hamiltonian governing the transformation to be time-dependent, but we assume that its energy bandwidth is uniformly bounded. Such an assumption is in many cases physically justified \cite{CaHoKoOk2006,CaHoKoOk2007,CaHoKoOk2008,RuSt2014,RuSt2015,BrMe2015,BrGiMe2015,WaAlJaLlLuMo2015,GeWuWaXuShXiRoDu2016,WaKo2020}.

We borrow terminology from thermodynamics and call a state where the expectation value of the observable is minimal \emph{passive} \cite{PuWo1978,Le1978,SkSiBr2015}. A \emph{passivization process} is then any unitary process that leaves the system in a passive state. We define the \emph{passivization time} of the system to be the shortest time in which a passivization process that meets the bounded bandwidth condition can drive the system into a passive state. And by a QSL we henceforth mean a lower bound on the passivization~time. 

After a short preliminary section, we derive a general QSL for systems that conform to the description above. We also describe conditions that ensure that the QSL agrees with the passivization time. In connection with this, we study collective passivization processes of ensembles of identical systems. In such, we allow correlations to develop during the process. We show that allowing correlations reduces the passivization time of the individual system, and we extend the QSL to a lower bound on the collective passivization time.

For many systems, the general QSL is not tight. This is because the passivization time depends in a rather intricate way on the eigenvalue spectra and the eigenspaces of the observable and the initial state. We calculate the passivization time explicitly for completely non-degenerate systems. We also develop a method to derive QSLs for systems where the observable or the state has a degenerate spectrum. The method, which is particularly effective for low-dimensional systems, generates tight QSLs under additional conditions (precisely described in the paper).

Quantum batteries are systems of potentially great practical importance to which we can apply the results in the current paper. We demonstrate this by deriving sharp estimates for the power with which energy can be extracted from a quantum battery. Here, we follow \cite{AlBaNi2004,AlFa2013,BiViMoGo2015,CaPoBiCeGoViMo2017} and define a quantum battery as a closed quantum system whose energy content can be adjusted through cyclic unitary processes. 

Recent discoveries suggest that collective effects, such as entanglement, can improve the performance of quantum batteries \cite{AlFa2013,J-FSaRiBeLe2019}. For example, allowing correlations to develop between the batteries in an ensemble seems to increase the power with which energy can be extracted from each of the batteries \cite{BiViMoGo2015,CaPoBiCeGoViMo2017,J-FSaRiBeLe2019}. We use the QSLs for collective passivation processes to derive estimates for the power of collective energy extraction processes. Similar results but for different constraints can be found in \cite{BiViMoGo2015,CaPoBiCeGoViMo2017,J-FSaRiBeLe2019}.

The outline of the paper is as follows. In Section \ref{preliminaries} we set up the problem, introduce terminology, and prove some preliminary results. In Section \ref{a qsl} we derive a general QSL and discuss circumstances under which it is tight. In this section we also consider multipartite systems. Section \ref{the non-deg case} deals with the case when both the observable and the initial state have a non-degenerate spectrum. Section \ref{deg cases} begins with a general discussion on the characteristics of time-optimal Hamiltonians. Then we develop a method to deal with systems for which the observable or the initial state has a degenerate spectrum. In Section \ref{charging} we use the results from previous sections to derive upper bounds on the power of quantum batteries. The paper concludes with a summary and an outlook.

\section{Preliminaries}\label{preliminaries}
Let $A$ be an observable of an $n$-dimensional quantum system prepared in a state $\rho_i$. In this paper, we examine how long it takes before the system enters a state where the expectation value of $A$ is minimal. We assume that the system evolves according to a von Neumann equation $\dot\rho(t)=-i[H(t),\rho(t)]$ with a Hamiltonian satisfying the inequality
\begin{equation}\label{bbcondition}
	\tr\big( H(t)^2 \big) \leq \omega^2.
\end{equation}
The quantity on the left is \emph{the bandwidth} of the Hamiltonian, and the quantity $\omega$ on the right is some fixed positive number. We will refer to the inequality \eqref{bbcondition} as \emph{the bounded bandwidth condition}.

The eigenvalue spectrum of a quantum state that evolves according to a von Neumann equation is preserved. We write $\S(\rho_i)$ for the space of all states that have the same spectrum as $\rho_i$. Also, we write $\H$ for the Hilbert space of the system and denote the group of unitary operators on $\H$ by $\U(\H)$. The unitary group acts  on states according to $U\cdot\rho = U\rho U^\dagger$. This action preserves and is transitive on $\S(\rho_i)$. Since $\rho_i$ is assumed to evolve unitarily, $\S(\rho_i)$ can be considered the entire state-space of the system. For simplicity, ``state'' will from now on refer to a member of $\S(\rho_i)$ and thus be an abbreviation for ``state isospectral to $\rho_i$.''

The expectation value function of $A$ on $\S(\rho_i)$,
\begin{equation}
	\EE_A(\rho) = \tr(\rho A),
\end{equation} 
is real-valued and continuous. Since $\S(\rho_i)$ is compact and connected, the image of $\EE_A$ is a closed and bounded interval. We borrow terminology from thermodynamics and call the states at which $\EE_A$ assumes its minimum value \emph{passive} and the states at which $\EE_A$ assumes its maximum value \emph{maximally active}.\footnote{In thermodynamics an active state is any state which is not passive. The states that maximizes the expectation value of some observable (typically a Hamiltonian) are called maximally active.}  Then, a more appropriate formulation of the main question addressed in this paper is: What is the shortest time in which $\rho_i$ can be transformed into a passive state using a Hamiltonian that satisfies the bounded bandwidth condition?

\begin{rmk}
All results in this paper have a counterpart with an analogous proof for a maximally active final~state.
\end{rmk}

\subsection{Extremal and incoherent states}\label{extremals}
We call a state incoherent if it commutes with $A$.\footnote{This definition differs slightly from the usual one because $A$ is not required to have a non-degenerate spectrum.} An incoherent state is thus a state that preserves $A$'s eigenspaces. Since $\H$ is the direct sum of the eigenspaces of $A$, the incoherent states decompose into direct sums of operators acting on the eigenspaces. We arrange the different eigenvalues of $A$ in increasing order and write $\A_k$ for the eigenspace belonging to eigenvalue number $k$. A state $\rho$ is then incoherent if, and only if, $\rho = \oplus_k \rho_k$ where $\rho_k$ is an operator on $\A_k$. The operator $\rho_k$ is the \emph{$k$th component of $\rho$}.  

\begin{prop}\label{incoherent}
Passive and maximally active states are incoherent.
\end{prop}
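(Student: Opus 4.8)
The plan is to recognise passive and maximally active states as critical points of the expectation-value function $\EE_A$ on the state space $\S(\rho_i)$, and then to extract the commutation relation $[A,\rho]=0$ from the first-order optimality condition. Since $\S(\rho_i)$ is the orbit of $\rho_i$ under the smooth, transitive action of $\U(\H)$, it is a compact smooth submanifold without boundary of the real vector space of Hermitian operators, and $\EE_A$ restricts to a smooth function on it. The minimum of $\EE_A$, attained precisely at the passive states, and its maximum, attained precisely at the maximally active states, therefore exist and are critical points of $\EE_A|_{\S(\rho_i)}$.

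First I would differentiate $\EE_A$ along admissible curves. For a Hermitian operator $H$, the curve $s \mapsto e^{-isH}\rho\,e^{isH}$ lies in $\S(\rho_i)$, passes through $\rho$ at $s=0$, and has velocity $i[\rho,H]$ there; moreover, every tangent vector of $\S(\rho_i)$ at $\rho$ arises in this way. Differentiating $\tr\big(e^{-isH}\rho\,e^{isH}A\big)$ at $s=0$ and using the cyclic invariance of the trace gives the value $i\,\tr\big(H[A,\rho]\big)$. Hence, if $\rho$ is passive or maximally active,
\begin{equation}
	\tr\big(H[A,\rho]\big) = 0 \qquad \text{for every Hermitian } H .
\end{equation}

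Finally I would convert this into $[A,\rho]=0$. Because $A$ and $\rho$ are Hermitian, $[A,\rho]$ is anti-Hermitian, so $C := i[A,\rho]$ is Hermitian and is an admissible choice of $H$ in the identity above. Substituting $H = C$ yields $\tr(C^2) = 0$, and since $\tr(C^2)$ equals the sum of the squares of the real eigenvalues of $C$, this forces $C = 0$ and hence $[A,\rho] = 0$. Thus $\rho$ commutes with $A$, which by the discussion in Section~\ref{extremals} is precisely the statement that $\rho = \oplus_k \rho_k$ decomposes along the eigenspaces $\A_k$ of $A$; that is, $\rho$ is incoherent.

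I do not anticipate a real obstacle. The only step needing a little care is the assertion that the curves $s \mapsto e^{-isH}\rho\,e^{isH}$ already exhaust all tangent directions of $\S(\rho_i)$ at $\rho$ --- a standard fact about unitary orbits --- but even this can be bypassed, since the argument only uses stationarity of $\EE_A$ along this particular family of curves.
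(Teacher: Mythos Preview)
Your proof is correct and follows essentially the same route as the paper's: both identify passive and maximally active states as critical points of $\EE_A$ on the unitary orbit, compute the derivative along curves $s\mapsto e^{-isH}\rho\,e^{isH}$ to obtain $\tr(H[A,\rho])=0$ for all Hermitian $H$, and conclude $[A,\rho]=0$. The only cosmetic difference is that the paper invokes non-degeneracy of the trace pairing directly, whereas you make the choice $H=i[A,\rho]$ explicit.
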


\noindent The proposition is known since before \cite{AlBaNi2004}. But for the reader's convenience, we have included a proof in Appendix \ref{E}. In this paper, we assume the following:
\begin{center}
	\emph{The initial state $\rho_i$ is incoherent.}
\end{center}

The group of unitary operators commuting with $A$ acts transitively on the passive states. This means on the one hand that if $\rho$ is a passive state and $U$ is a unitary that commutes with $A$, then $U\rho U^\dagger$ is a passive state, and on the other hand that every passive state is of the form $U\rho U^\dagger$ for some unitary $U$ commuting with $A$. Since the unitaries commuting with $A$ decompose into direct sums of unitaries operating on the eigenspaces of $A$, all passive states have isospectral components. We conclude that a state is passive if, and only if, it is incoherent and has components isospectral with those of a passive state.

\subsection{Time-optimal Hamiltonians}
We define \emph{the passivization time} $\taumin$ as the shortest time in which $\rho_i$ can be transformed into a passive state using a Hamiltonian that satisfies the bounded bandwidth condition. Below we determine $\taumin$ in several important cases. We also give examples of Hamiltonians that realize such an optimal transformation. We say that a Hamiltonian is \emph{time-optimal} if it satisfies \eqref{bbcondition} and transforms $\rho_i$ into a passive state in the time $\taumin$. According to the next proposition, time-optimal Hamiltonians saturate the inequality \eqref{bbcondition} at all times.

\begin{prop}\label{saturates}
Time-optimal Hamiltonians saturate the bounded bandwidth condition.
\end{prop}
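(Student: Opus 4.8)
The plan is to argue by contradiction using a time-rescaling (reparametrization) trick. Suppose $H(t)$ is a time-optimal Hamiltonian, so that it satisfies $\tr(H(t)^2)\le\omega^2$ and steers $\rho_i$ into a passive state $\rho(\taumin)$ in time $\taumin$, but suppose $\tr(H(t)^2)<\omega^2$ on some subset of $[0,\taumin]$ of positive measure. The idea is that wherever the Hamiltonian is not working at full bandwidth, time is being wasted: one can speed up the evolution there and reach the same passive state strictly faster, contradicting the definition of $\taumin$.

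First I would make the rescaling precise. Let $\rho(t)$ solve the von Neumann equation for $H(t)$ on $[0,\taumin]$. Introduce a smooth, strictly increasing reparametrization $s\mapsto t(s)$ with $t(0)=0$, and define $\tilde\rho(s)=\rho(t(s))$. Then $\tilde\rho$ satisfies $\dot{\tilde\rho}(s)=-i[\tilde H(s),\tilde\rho(s)]$ with $\tilde H(s)=t'(s)\,H(t(s))$, so the rescaled bandwidth is $\tr(\tilde H(s)^2)=t'(s)^2\,\tr(H(t(s))^2)$. Choosing $t'(s)$ so that $t'(s)^2\,\tr(H(t(s))^2)=\omega^2$ at every point where $H\ne 0$ (and letting $t$ advance trivially through any interval where $H=0$, which cannot occur on a set of positive measure for a time-optimal Hamiltonian since there the state is stationary and that portion could simply be deleted), the new Hamiltonian exactly saturates \eqref{bbcondition}. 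The total duration in the new parametrization is $\tilde\tau=\int_0^{\taumin} (t'(s))^{-1}\big|_{s=s(t)}\,\dt = \int_0^{\taumin}\sqrt{\tr(H(t)^2)}/\omega\,\dt \le \taumin$, with strict inequality precisely when $\tr(H(t)^2)<\omega^2$ on a positive-measure set. Since $\tilde\rho(\tilde\tau)=\rho(\taumin)$ is passive, $\tilde H$ is an admissible passivization Hamiltonian with duration $\tilde\tau<\taumin$, contradicting minimality.

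The main obstacle is the technical care needed around points where $H(t)$ vanishes or where $\tr(H(t)^2)$ gets close to zero: there the factor $1/\sqrt{\tr(H(t)^2)}$ blows up and the reparametrization $t(s)$ is not well-behaved, so one has to handle those intervals separately. The clean way is to observe that on any interval where $H=0$ the state does not move, so excising such intervals yields a strictly shorter admissible process already — hence for a time-optimal $H$ the set $\{t:\tr(H(t)^2)=0\}$ has measure zero, and on the complement the rescaling is legitimate. A secondary point is to confirm that $\tilde H$ inherits enough regularity (e.g.\ measurability and local integrability) for the rescaled von Neumann equation to have $\tilde\rho$ as its unique solution; this is routine since reparametrization by an absolutely continuous, strictly increasing change of variables preserves the relevant function classes. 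Once these measure-theoretic details are dispatched, the strict inequality $\tilde\tau<\taumin$ is immediate, completing the contradiction.
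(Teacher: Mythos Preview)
Your proposal is correct and uses essentially the same time-reparametrization argument as the paper. The only difference is that the paper sidesteps the zero-of-$H$ issue you flag by rescaling with an auxiliary continuous, everywhere positive function $f(t)$ satisfying $\tr(H(t)^2)\le f(t)\le\omega^2$ rather than with $\sqrt{\tr(H(t)^2)}$ itself; choosing $f$ strictly below $\omega^2$ near any point of non-saturation then yields the contradiction without ever dividing by zero or worrying about the measure of the vanishing set.
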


\noindent For a proof, see \cite{WaAlJaLlLuMo2015} or Appendix \ref{S}.

The time-evolution operator associated with a Hamiltonian can be considered as a curve in $\U(\H)$. We equip $\U(\H)$ with the bi-invariant Riemannian metric $g$ that agrees with the Hilbert-Schmidt inner product on the Lie algebra of $\U(\H)$. By Proposition \ref{saturates}, the time-evolution operator $U(t)$ associated with a time-optimal Hamiltonian $H(t)$ then has a constant speed $\omega$:
\begin{equation}
	g\big( \dot{U}(t), \dot{U}(t) \big) 
	= \tr\big( H(t)^2 \big) 
	= \omega^2.
\end{equation}

Let $\P(\rho_i)$ be the set of unitary operators that transform the initial state into a passive state:
\begin{equation}
	\P(\rho_i) = \{ U\in\U(\H) : U\rho_i U^\dagger \text{ is a passive state}\}.
\end{equation}
Proposition \ref{passivizing set} below says that $\P(\rho_i)$ is a submanifold of $\U(\H)$. The next proposition, proven in Appendix \ref{U}, transforms the problem of determining the passivization time into a geometric problem. We write $\1$ for the identity operator on $\H$.

\begin{prop}\label{shortest}
The time-evolution operator of a time-optimal Hamiltonian is a shortest curve from $\1$ to $\P(\rho_i)$.
\end{prop}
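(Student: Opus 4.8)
The plan is to identify the passivization time with a Riemannian distance and then read the claim off directly. Write $d=\dist(\1,\P(\rho_i))$ for the distance from $\1$ to the submanifold $\P(\rho_i)$ in the metric $g$. If $\rho_i$ is already passive, then $\1\in\P(\rho_i)$, so $d=0=\taumin$ and there is nothing to prove; hence assume $\rho_i$ is not passive. Since $\P(\rho_i)$ is a closed subset of the compact manifold $\U(\H)$, the infimum defining $d$ is attained, and $d>0$.

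The only dictionary needed is this: a piecewise-smooth curve $t\mapsto V(t)$ in $\U(\H)$ with $V(0)=\1$ is the time-evolution operator of the Hamiltonian $H(t)=i\dot V(t)V(t)^\dagger$, and conversely; and because $g$ is bi-invariant and restricts to the Hilbert--Schmidt inner product on the Lie algebra, $g(\dot V(t),\dot V(t))=\tr(H(t)^2)$. Consequently the length of $V$ on $[0,T]$ equals $\int_0^T\!\sqrt{\tr(H(t)^2)}\,\dt$, the Hamiltonian obeys the bounded bandwidth condition \eqref{bbcondition} exactly when $V$ has speed at most $\omega$ throughout, and $V(T)\in\P(\rho_i)$ exactly when $V(T)\rho_iV(T)^\dagger$ is passive.

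Using this, I would establish $\taumin=d/\omega$ by two inequalities. For ``$\ge$'': if a Hamiltonian obeying \eqref{bbcondition} drives $\rho_i$ into a passive state in time $T$, its time-evolution operator is a curve from $\1$ to $\P(\rho_i)$ of length $\int_0^T\!\sqrt{\tr(H(t)^2)}\,\dt\le\omega T$, so $d\le\omega T$, i.e.\ $T\ge d/\omega$; taking the infimum over all admissible $T$ gives $\taumin\ge d/\omega$. For ``$\le$'': take a minimizing geodesic from $\1$ to $\P(\rho_i)$ and reparametrize it by constant speed $\omega$; the resulting curve is defined on $[0,d/\omega]$ and, via the dictionary, is the time-evolution operator of a Hamiltonian with $\tr(H(t)^2)=\omega^2$ that carries $\rho_i$ to a passive state in time $d/\omega$, so $\taumin\le d/\omega$.

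Finally, let $H$ be a time-optimal Hamiltonian with time-evolution operator $U$. By Proposition \ref{saturates} we have $\tr(H(t)^2)=\omega^2$ for all $t\in[0,\taumin]$, so $U|_{[0,\taumin]}$ has length $\omega\taumin=d$, and its endpoints are $\1$ and $U(\taumin)\in\P(\rho_i)$. A curve joining $\1$ to $\P(\rho_i)$ whose length equals $\dist(\1,\P(\rho_i))$ is by definition a shortest such curve, which is the assertion. I expect the only delicate point to be the reparametrization in the ``$\le$'' step --- justifying that a length-minimizing curve may be taken regular and rescaled to a smooth, admissible time-dependent Hamiltonian --- together with the routine checks that $\P(\rho_i)$ is closed (so that $d$ is attained and realized by a geodesic) and that the degenerate case $\1\in\P(\rho_i)$ is separated off at the start.
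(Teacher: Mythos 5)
Your argument is correct and is essentially the paper's: both rest on the dictionary between Hamiltonians satisfying \eqref{bbcondition} and curves in $\U(\H)$ of speed at most $\omega$, on Proposition \ref{saturates} to give the time-optimal curve length $\omega\taumin$, and on a constant-speed reparametrization to compare lengths with times. The only cosmetic difference is that the paper applies the reparametrization to an \emph{arbitrary} competing curve from $\1$ to $\P(\rho_i)$ (showing directly that its length is at least $\omega\taumin$), which sidesteps the one point you flag as delicate, namely the existence and regularity of a minimizing geodesic realizing $\dist(\1,\P(\rho_i))$.
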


The shortest curves connecting $\1$ and $\P(\rho_i)$ are pre-geodesics (that is, they are geodesics if parameterized such that they have a constant speed \cite{Sa1996}). Every geodesic of $g$ emanating from $\1$ agrees with a one-parameter subgroup of $\U(\H)$ on its domain of definition. Conversely, every one-parameter subgroup of $\U(\H)$ is a geodesic of $g$; see \cite{Ar2003}. Thus, a curve of unitaries $U(t)$ such that $U(0)=\1$ is a geodesic if, and only if, $U(t)=e^{-itH}$ for some Hermitian operator $H$.~Proposition \ref{time-independent} is a direct consequence of Propositions \ref{saturates} and~\ref{shortest}.

\begin{prop}\label{time-independent}
Time-optimal Hamiltonians are time-independent.
\end{prop}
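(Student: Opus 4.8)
The plan is to chain together the three structural facts already in place: Proposition \ref{saturates} fixes the parametrization of the time-evolution operator, Proposition \ref{shortest} identifies that operator as a length-minimizing curve, and the Riemannian geometry of $(\U(\H),g)$ recalled just above then forces the curve to be a one-parameter subgroup, whence time-independence is immediate.

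First I would note that if $H(t)$ is time-optimal then, by Proposition \ref{saturates}, $\tr(H(t)^2)=\omega^2$ for every $t$, so the associated time-evolution curve $U(t)$ has constant speed $\omega$ with respect to $g$. By Proposition \ref{shortest}, this same $U(t)$ is a shortest curve from $\1$ to $\P(\rho_i)$. A shortest curve is a pre-geodesic, and a pre-geodesic with constant speed is an affinely parametrized geodesic; hence $U(t)$ is a $g$-geodesic emanating from $\1$.

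Next I would invoke the characterization of geodesics of the bi-invariant metric $g$ stated in the text: a curve $U(t)$ with $U(0)=\1$ is a $g$-geodesic if and only if $U(t)=e^{-itH}$ for some Hermitian operator $H$. Applying this to our $U(t)$ yields $U(t)=e^{-itH}$ for a fixed Hermitian $H$. Recovering the generator from the propagator via $H(t)=i\dot U(t)U(t)^\dagger$ then gives $H(t)=H$, a constant operator; equivalently, $\dot U(t)=-iHU(t)$ is precisely the Schr\"odinger equation with the time-independent Hamiltonian $H$, and this $H$ still saturates the bounded bandwidth condition since $\tr(H^2)=\tr(H(t)^2)=\omega^2$. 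Therefore every time-optimal Hamiltonian is time-independent.

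I expect no real obstacle here; as the surrounding text indicates, the statement is a direct consequence of Propositions \ref{saturates} and \ref{shortest} together with standard Riemannian geometry. The one point deserving care is the step ``shortest curve $\Rightarrow$ geodesic'': length-minimizers are a priori only pre-geodesics, i.e.\ geodesics up to reparametrization, so one genuinely needs the constant-speed property supplied by Proposition \ref{saturates} to rule out the possibility $U(t)=e^{-if(t)H}$ for a nonlinear increasing $f$ and to conclude that $U(t)$ is literally of the form $e^{-itH}$.
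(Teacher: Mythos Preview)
Your argument is correct and is essentially the paper's own: the text states that Proposition~\ref{time-independent} is a direct consequence of Propositions~\ref{saturates} and~\ref{shortest}, via exactly the chain you spell out (shortest curve $\Rightarrow$ pre-geodesic, constant speed from Proposition~\ref{saturates} $\Rightarrow$ geodesic, geodesic from $\1$ in the bi-invariant metric $\Rightarrow$ one-parameter subgroup $e^{-itH}$). Your careful remark that Proposition~\ref{saturates} is needed to exclude reparametrizations $e^{-if(t)H}$ is precisely the role that proposition plays here.
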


The geodesic distance between two unitary operators is the minimum of the lengths of all smooth curves connecting the two operators. One can express the geodesic distance between two unitaries $U$ and $V$ in terms of the principal logarithm and the Hilbert-Schmidt norm:
\begin{equation}\label{logdistance}
	\dist(U,V)
	= \|\Log (U^\dagger V)\| 
	= \sqrt{-\tr\big(\Log (U^\dagger V)^2\big)}.
\end{equation}
See Appendices \ref{L} and \ref{G} for details. It follows that
\begin{equation}\label{distance}
\begin{split}
	\dist\big(\1,\P(\rho_i)\big)
	&= \min\big\{ \dist(\1,U) : U\in\P(\rho_i)\big\} \\
	&= \min\big\{ \| \Log U\| : U\in\P(\rho_i)\big\}.
\end{split}
\end{equation}
The first identity is the definition of the distance between $\1$ and $\P(\rho_i)$. Equation \eqref{distance} and Propositions \ref{saturates} and \ref{shortest} together imply that
\begin{equation}\label{taumin}
	\taumin = \frac{1}{\omega}\min\big\{ \| \Log U\| : U\in\P(\rho_i)\big\}.
\end{equation}
Unfortunately, the minimum on the right is in general difficult to determine. But we will find explicit expressions for $\taumin$ in several important cases.

\subsection{Passivizing unitaries and isotropy groups}\label{puig}
As was mentioned above, the unitary group acts on states by left conjugation. We write $\U(\H)_{\rho_i}$ for the isotropy group of the initial state:
\begin{equation}
	\U(\H)_{\rho_i} = \{ U\in \U(\H) : U\rho_i U^\dagger =\rho_i\}.
\end{equation}
The unitary group also acts on observables by right conjugation, $U\cdot B=U^\dagger B U$, and we write $\U(\H)_{A}$ for the isotropy group of the observable $A$:
\begin{equation}
	\U(\H)_{A} = \{ U\in \U(\H) : U^\dagger A U =A\}.
\end{equation}

\begin{prop}\label{passivizing set}
The set $\P(\rho_i)$ is a submanifold of $\U(\H)$. Moreover, if $P$ is any unitary in $\P(\rho_i)$, then
\begin{equation}
	\P(\rho_i)=\{ UPV : U\in \U(\H)_{A}, V\in \U(\H)_{\rho_i}\}.
\end{equation}
\end{prop}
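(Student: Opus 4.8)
The plan is to establish the explicit description of $\P(\rho_i)$ first and then read off the submanifold property from it. The one substantive input is already available from the discussion preceding the proposition: $\U(\H)_A=\{U:U^\dagger A U=A\}$ is precisely the group of unitaries commuting with $A$, and this group acts transitively on the set of passive states.

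First I would prove, for any fixed $P\in\P(\rho_i)$, the identity $\P(\rho_i)=\{UPV:U\in\U(\H)_A,\ V\in\U(\H)_{\rho_i}\}$ by showing the two inclusions. For ``$\supseteq$'', given $U\in\U(\H)_A$ and $V\in\U(\H)_{\rho_i}$, one computes $(UPV)\rho_i(UPV)^\dagger=UP\rho_iP^\dagger U^\dagger$ using $V\rho_iV^\dagger=\rho_i$; since $P\rho_iP^\dagger$ is passive and conjugation by an element of $\U(\H)_A$ maps passive states to passive states, $UPV\in\P(\rho_i)$. For ``$\subseteq$'', given $Q\in\P(\rho_i)$, both $Q\rho_iQ^\dagger$ and $P\rho_iP^\dagger$ are passive, so by transitivity there is $U\in\U(\H)_A$ with $Q\rho_iQ^\dagger=UP\rho_iP^\dagger U^\dagger$; then $V:=(UP)^\dagger Q$ satisfies $V\rho_iV^\dagger=\rho_i$, i.e.\ $V\in\U(\H)_{\rho_i}$, and $Q=UPV$ as required.

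For the submanifold claim I would recognise the right-hand side as the orbit of $P$ under a smooth action of a compact Lie group. Consider $G=\U(\H)_A\times\U(\H)_{\rho_i}$ acting on $\U(\H)$ by $(U,V)\cdot Q=UQV^\dagger$; this is a genuine left action since $(V_1V_2)^\dagger=V_2^\dagger V_1^\dagger$. Both isotropy groups are closed subgroups of the compact group $\U(\H)$, hence compact Lie groups, so $G$ is a compact Lie group acting smoothly on $\U(\H)$, and the orbit of $P$ is $\{UPV^\dagger\}=\{UPV\}=\P(\rho_i)$, the last equality because inversion is a bijection of $\U(\H)_{\rho_i}$. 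Since orbits of smooth actions of compact Lie groups are embedded submanifolds — the orbit map induces a diffeomorphism $G/G_P\to\P(\rho_i)$, where $G_P$ is the stabilizer of $P$ — the proposition follows.

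I do not expect a genuine obstacle here; the work is essentially bookkeeping. The points that need care are: confirming that $\P(\rho_i)$ is a \emph{single} $G$-orbit, which is exactly the content of the set identity above, so the logical order matters; and invoking the version of the orbit theorem for \emph{compact} group actions so as to obtain an embedded (not merely immersed) submanifold. As an alternative to the orbit argument one could instead note that $Q\mapsto Q\rho_iQ^\dagger$ is a submersion of $\U(\H)$ onto $\S(\rho_i)$ and that the passive states form a submanifold of $\S(\rho_i)$ (again a compact-group orbit), whence $\P(\rho_i)$ is the preimage of a submanifold under a submersion.
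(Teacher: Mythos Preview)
Your proposal is correct and follows essentially the same route as the paper: both prove the two inclusions using that $\U(\H)_A$ preserves and acts transitively on the passive states, and both deduce the submanifold claim by recognizing $\P(\rho_i)$ as the orbit of $P$ under the action $(U,V)\cdot W=UWV^\dagger$ of the compact group $\U(\H)_A\times\U(\H)_{\rho_i}$. The paper phrases the last step as ``compact group $\Rightarrow$ proper action $\Rightarrow$ orbits are submanifolds,'' which is exactly the orbit theorem you invoke.
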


\noindent The proof is postponed to Appendix \ref{P}. Proposition \ref{passivizing set} implies that if $P$ is any passivizing unitary, then
\begin{equation}
    \dist(\1,\P(\rho_i))=\min_{U,V}\|\Log(UPV)\|.
\end{equation}
The minimum is over the $U$s in $\U(\H)_A$ and $V$s in~$\U(\H)_{\rho_i}$.

Since $\rho_i$ is incoherent, $A$ and $\rho_i$ are simultaneously diagonalizable. We fix a common orthonormal eigenbasis $\ket{1},\ket{2},\dots,\ket{n}$ of $A$ and $\rho_i$, and we write $a_1,a_2,\dots,a_n$ and $p_1,p_2,\dots, p_n$ for the associated eigenvalues. We furthermore assume that the ordering of the vectors in the basis, hereafter referred to as the \emph{computational basis}, is such that $a_1\leq a_2\leq \cdots \leq a_n$. This ordering is consistent with the prior ordering of the different eigenvalues of $A$ in Section \ref{extremals}.

\section{A quantum speed limit}\label{a qsl} 
In this section, we derive a QSL for the time it takes to passivize an incoherent state using a Hamiltonian that satisfies the bounded bandwidth condition. To this end, let $\delta_k$ be the number of eigenvalues that the $k$th component of $\rho_i$ does not have in common with the $k$th component of a passive state, and let $\delta$ be the sum of all the $\delta_k$s. We call $\delta$ the \emph{discrepancy} of the initial state, and we define the QSL $\tauqsl$ as
\begin{equation}\label{theqsl}
	\tauqsl = \frac{\pi\sqrt{\delta}}{2\omega}.
\end{equation}

\begin{prop}\label{lowerbound}
The QSL $\tauqsl$ bounds the passivization time from below.
\end{prop}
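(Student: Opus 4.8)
The strategy is to use the reformulation \eqref{taumin}, which says $\taumin\omega = \min\{\|\Log U\| : U\in\P(\rho_i)\}$, and to bound $\|\Log U\|$ from below for \emph{every} passivizing unitary $U$. So fix an arbitrary $U\in\P(\rho_i)$ and set $\rho_f = U\rho_i U^\dagger$, a passive state. Since both $\rho_i$ and $\rho_f$ are incoherent, they decompose as $\rho_i = \oplus_k \rho_{i,k}$ and $\rho_f = \oplus_k \rho_{f,k}$ over the eigenspaces $\A_k$ of $A$. Write $H = \Log U$, so $U = e^{-iH}$ and $\|\Log U\|^2 = -\tr(H^2) = \sum_j \theta_j^2$, where the $\theta_j\in(-\pi,\pi]$ are the eigenvalues of $H$ (the "rotation angles"). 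I want to show $\sum_j \theta_j^2 \geq \delta\,(\pi/2)^2$, which gives $\taumin \geq \pi\sqrt\delta/(2\omega) = \tauqsl$.

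The heart of the argument is a counting/support lemma applied eigenspace by eigenspace. For each $k$, let $d_k = \dim\A_k$ and let $V_{i,k}, V_{f,k}\subseteq\A_k$ be the sums of the eigenspaces of $\rho_{i,k}$ and $\rho_{f,k}$ corresponding to those eigenvalues that are \emph{not} shared between the $k$th components — so $\dim V_{i,k} \geq \delta_k$ and $\dim V_{f,k}\geq\delta_k$ (with equality after a minimal choice). The key geometric fact is that $U$ must move the "$\rho_i$-only" part of $\A_k$ off itself: more precisely, one shows that no nonzero vector in $V_{i,k}$ can be fixed by $U$ up to a phase while staying inside $\A_k$, because $\rho_f$ restricted to $\A_k$ has no eigenvalue matching those occurring on $V_{i,k}$. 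Quantitatively, I would argue that $\braket{\psi}{U\psi}$ is bounded away from $1$ in modulus for unit $\psi\in V_{i,k}$ — in fact the relevant bound forces a substantial rotation on a subspace of dimension at least $\delta_k$, and summing over $k$ produces at least $\delta$ "units" of rotation. Then I convert this into a bound on $\sum_j\theta_j^2$ via the following elementary principle: if a unitary $e^{-iH}$ maps a subspace $W$ of dimension $m$ to a subspace orthogonal to $W$ (or more weakly, $W\cap U^{-1}W = 0$ in the appropriate sense), then at least $m$ of the angles $\theta_j$ satisfy $|\theta_j|\geq \pi/2$; indeed, $|\braket{\psi}{U\psi}|\le\cos(\pi/2)=0$ needs an angle of at least $\pi/2$, and a dimension-counting argument (interlacing / the fact that $U$ restricted to a $2m$-dimensional space containing $W$ and $UW$ has at least $m$ eigenvalues with $\Re e^{-i\theta}\le 0$) upgrades this to $m$ separate large angles.

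**Main obstacle.** The subtle point — and the step I expect to take the most care — is making the dimension count rigorous: a single vector being rotated by $\geq\pi/2$ only gives one large angle, but I need $\delta_k$ of them per eigenspace, and these must not be "double counted" across eigenspaces. The clean way is to work with the whole space at once: let $V_i = \oplus_k V_{i,k}$ and $V_f = \oplus_k V_{f,k}$, each of dimension $\delta$, and observe $UV_i = V_f$ while — crucially — $V_i$ and $V_f$ are "disjoint in spectrum within each $A$-eigenspace," which forces $\langle\psi|U\psi\rangle = \langle\psi|P_{V_i}U\psi\rangle$ to vanish for $\psi\in V_i$ once one restricts attention correctly, i.e. $V_i \perp U V_i$ inside each $\A_k$ hence globally. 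Wait — that last implication needs $V_f\cap V_i$ controlled, which is exactly where incoherence and the "not-in-common eigenvalue" bookkeeping enter. Granting $V_i\perp UV_i$, a standard lemma (e.g. via the CS decomposition of $U$ relative to the splitting $V_i\oplus V_i^\perp$, or via the variational characterization $\sum_{j=1}^{\delta}\theta_j^{\downarrow\,2}\geq \delta(\pi/2)^2$ when $U$ swaps a $\delta$-dimensional subspace with part of its orthocomplement) finishes the job. I would isolate this swap-subspace lemma as a separate claim, prove it once in general, and then the Proposition follows by feeding it $V_i$ and noting $\|\Log U\|^2 = \sum_j\theta_j^2 \geq \sum_{j=1}^\delta (\pi/2)^2 = \delta\pi^2/4$.
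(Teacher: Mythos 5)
Your overall strategy---bounding $\|\Log U\|$ from below for \emph{every} passivizing $U$ via \eqref{taumin}---is sound, and your ``swap-subspace lemma'' is correct as an abstract statement: if $UW\perp W$ with $\dim W=m$, then $\tfrac{1}{2}(U+U^\dagger)$ vanishes on $W$, so by Courant--Fischer at least $m$ of its eigenvalues $\cos\theta_j$ are $\leq 0$. But the argument breaks at exactly the step you flagged and then granted: the hypothesis $V_i\perp UV_i$ is false in general. Incoherence only gives the blockwise statement $UV_{i,k}\perp\A_k$; the cross terms $\braket{\psi_k}{U\psi_l}$ with $\psi_k\in V_{i,k}$, $\psi_l\in V_{i,l}$, $k\neq l$, have no reason to vanish, since $UV_{i,l}$ lands in $\oplus_{k'\neq l}\A_{k'}$, which contains $V_{i,k}$. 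The simplest counterexample is a non-degenerate two-level system with $p_1<p_2$: there $\delta_1=\delta_2=1$, so $V_i=\H$ and $UV_i=\H$, and no nonzero subspace is orthogonal to itself. Worse, the intended conclusion---that at least $\delta$ eigenphases of $\Log U$ have magnitude $\geq\pi/2$---is itself false for the relevant unitaries. For the non-degenerate qutrit with $p_2<p_1<p_3$ of Section \ref{the non-deg case} one has $\delta=3$, yet the optimal passivizing unitary has eigenphases $0$ and $\pm 2\pi/3$, only two of which reach $\pi/2$; the bound $\sum_j\theta_j^2=8\pi^2/9\geq 3\pi^2/4$ holds, but not by exhibiting $\delta$ individually large angles. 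So no refinement of the dimension count can rescue the eigenphase route.

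The repair is to abandon the spectral decomposition of $\Log U$ and expand the Hilbert--Schmidt norm in the computational basis instead, which is what the paper does: with $H$ the (time-independent, by Proposition \ref{time-independent}) optimal Hamiltonian, $\omega^2=\tr(H^2)=\sum_k\bra{k}H^2\ket{k}$, each curve $e^{-itH}\ket{k}$ has constant speed $\bra{k}H^2\ket{k}^{1/2}$, and at least $\delta$ of the computational basis vectors must end up orthogonal to where they started (this is your blockwise observation $UV_{i,k}\perp\A_k$ applied vector by vector), hence must traverse a spherical distance of at least $\pi/2$. Each such $k$ contributes $\bra{k}H^2\ket{k}\taumin^2\geq\pi^2/4$ to the sum, and there are no cross terms to control. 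Your per-eigenspace bookkeeping is the right ingredient; it just has to be fed into the diagonal of $H^2$ rather than into the eigenvalues of $\Log U$.
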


\begin{proof}
Let $H$ be a time-optimal Hamiltonian. For each computational basis vector $\ket{k}$ define $\ket{k(t)}=e^{-itH}\ket{k}$ and regard $\ket{k(t)}$ as a curve on the unit sphere in $\H$. The speed of $\ket{k(t)}$ equals $\bra{k}H^2\ket{k}^{1/2}$, and Proposition \ref{saturates} says that the squares of these speeds sum to $\omega^2$. Since the system evolves into a passive state in the time $\taumin$, at least $\delta$ of the basis vectors evolve, in this time, into eigenvectors of $A$ with a different eigenvalue and, thus, into orthogonal states. The spherical distance between any pair of orthogonal states is $\pi/2$. Hence,
\begin{equation}
	\omega^2\taumin^2=\sum_{k=1}^n \bra{k}H^2\ket{k} \taumin^2 
	\geq \frac{\pi^2\delta}{4}.
\end{equation}
This proves that $\taumin\geq\tauqsl$.
\end{proof}

A natural question is when the passivization time is equal to $\tauqsl$. Below we will see that such is the case if the initial state can be `permuted' to a passive state with a permutation whose cycles have a length of at most 2. We will also see examples of systems for which the passivization time is greater than  $\tauqsl$.

\subsection{Systems for which the passivization time equals the quantum speed limit}\label{agree}
For a not necessarily unique permutation $\sigma$ of the set $\{1,2,\dots,n\}$, the state $\rho_\sigma=\sum_k p_{\sigma(k)}\ketbra{k}{k}$ is passive. We call such a permutation \emph{passivizing}. Define the permutation operator associated with $\sigma$ as
\begin{equation}
	P_\sigma = \sum_{k=1}^n \ketbra{k}{\sigma(k)}.
\end{equation}
The operator $P_\sigma$ is unitary and $\rho_\sigma=P_\sigma\rho_iP_\sigma^\dagger$.

Any permutation of $\{1,2,\dots,n\}$ can be uniquely decomposed into disjoint cycles \cite{La1993}. Each cycle is itself a permutation of a subset of $\{1,2,\dots,n\}$. We denote the cycle which permutes the subset $\{k_1, k_2,\dots, k_l\}$ according to $k_1\to k_2\to \cdots\to k_l\to k_1$ by $(k_1\, k_2\,\dots\, k_l)$; see Figure \ref{cycle}.
\begin{figure}[t]
	\centering
	\includegraphics[width=0.20\textwidth]{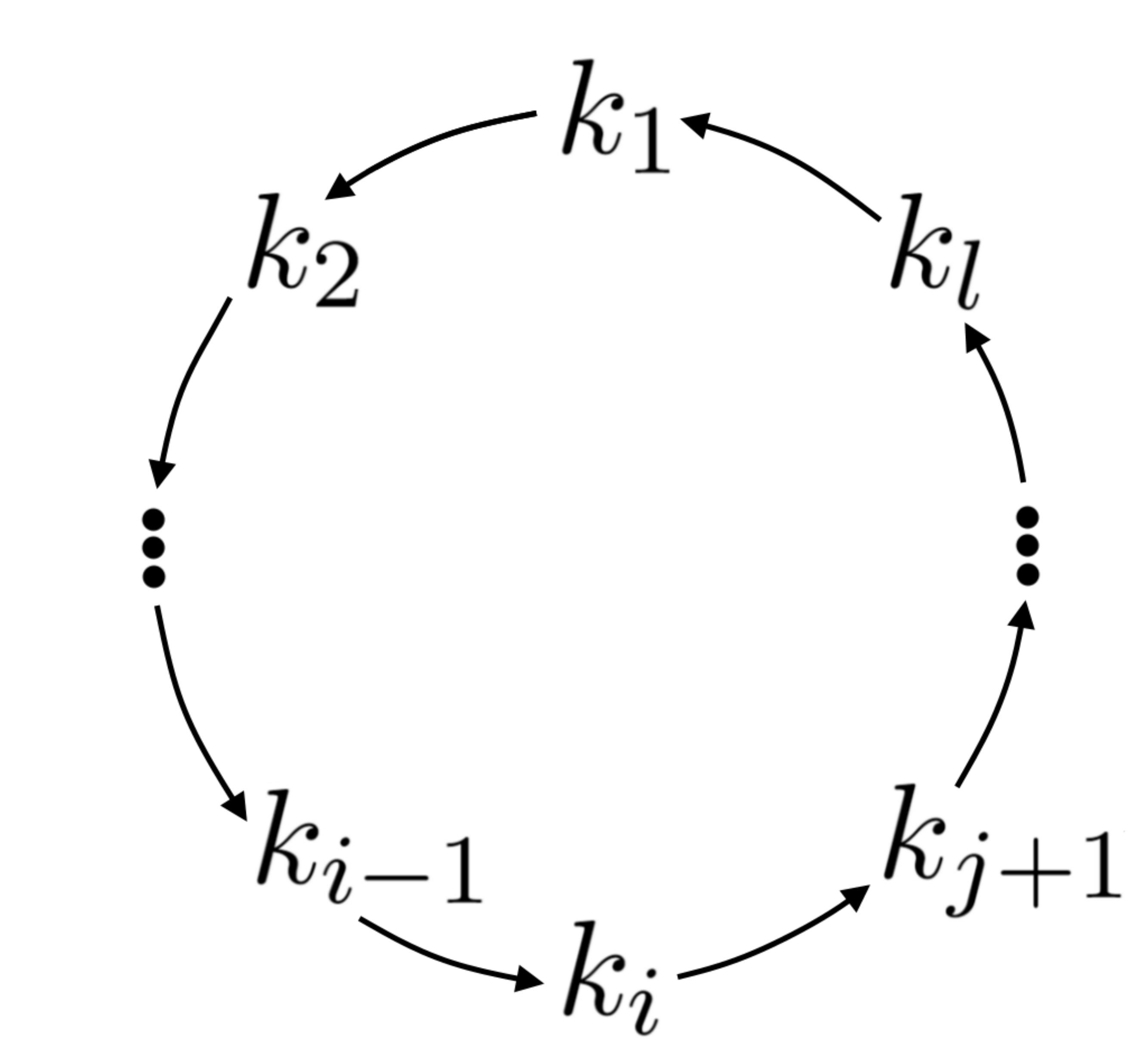}
	\caption{A graphical representation of $(k_1\,k_2\dots k_l)$.
	The number of elements in the cycle is the length of the cycle. A trivial cycle has length $1$ and a transposition has length $2$.} 
	\label{cycle}
\end{figure}
The number of elements $l$ is the length of the cycle. A cycle of length $1$ will be called trivial, and a cycle of length $2$ will be called a transposition. A permutation whose square leaves every element of $\{1,2,\dots,n\}$ invariant is called an involution. Being an involution is equivalent to having cycles of length at most $2$.

\begin{prop}\label{2cycles}
If $\rho_i$ can be passivized by an involution, then the passivization time equals $\tauqsl$.
\end{prop}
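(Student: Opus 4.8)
The plan is to pair the lower bound already established in Proposition~\ref{lowerbound} with a matching upper bound, obtained by writing down an explicit Hamiltonian that drives $\rho_i$ to a passive state in time $\tauqsl$. Fix a passivizing involution $\sigma$ and let $(k_1\,l_1),\dots,(k_m\,l_m)$ be its transpositions; being an involution, these have pairwise disjoint supports and every index outside their union is fixed by $\sigma$. I would first pass to a ``reduced'' passivizing involution: a transposition swapping two indices that lie in the same eigenspace of $A$, or that carry the same eigenvalue of $\rho_i$, can be deleted without destroying passivity of $\rho_\sigma$, and after removing all such redundant transpositions the number that remains should be exactly $\delta/2$.

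With $\sigma$ reduced, set
\[
  H \;=\; \frac{\omega}{\sqrt{\delta}}\sum_{j=1}^{\delta/2}\big(\ketbra{k_j}{l_j}+\ketbra{l_j}{k_j}\big).
\]
Since the supports are disjoint, $H$ is block diagonal: on each plane $\linspan\{\ket{k_j},\ket{l_j}\}$ it acts as $(\omega/\sqrt{\delta})$ times the exchange operator $E_j=\ketbra{k_j}{l_j}+\ketbra{l_j}{k_j}$, and as $0$ on the orthogonal complement. Because $E_j^2$ restricts to the identity on its plane, $\tr(H^2)=(\omega^2/\delta)\sum_j\tr(E_j^2)=(\omega^2/\delta)(2\cdot\delta/2)=\omega^2$, so $H$ saturates the bounded bandwidth condition and is time independent, as Proposition~\ref{time-independent} requires. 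At $T=\tauqsl=\pi\sqrt{\delta}/(2\omega)$ one has $\omega T/\sqrt{\delta}=\pi/2$, so on each plane $e^{-iTH}$ restricts to $\cos(\pi/2)\,\1-i\sin(\pi/2)E_j=-iE_j$; that is, $e^{-iTH}$ interchanges $\ket{k_j}$ and $\ket{l_j}$ up to an overall phase and fixes every other computational basis vector. Conjugation by $e^{-iTH}$ then sends $\ketbra{k_j}{k_j}\mapsto\ketbra{l_j}{l_j}$ and $\ketbra{l_j}{l_j}\mapsto\ketbra{k_j}{k_j}$ and leaves the rest untouched, so $e^{-iTH}\rho_i e^{iTH}=\sum_k p_{\sigma(k)}\ketbra{k}{k}=\rho_\sigma$, which is passive. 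This yields $\taumin\le\tauqsl$, and combined with Proposition~\ref{lowerbound} it gives $\taumin=\tauqsl$.

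The one genuinely delicate point, and the step I expect to be the main obstacle, is the reduction in the first paragraph: that a passivizing involution, after deletion of the redundant transpositions, has exactly $\delta/2$ of them, equivalently that the $\delta$ basis vectors the lower bound forces across eigenvalue boundaries can be organized into transposition pairs by an involution. This is a combinatorial accounting of how eigenvalues are traded among the eigenspaces of $A$, and it is precisely where the possible degeneracy of $A$ or of $\rho_i$ has to be handled with care; the rest is either routine or is already supplied by Propositions~\ref{saturates}--\ref{time-independent} and~\ref{lowerbound}. If the involution one is handed carries more than $\delta/2$ transpositions, the plan first has to produce a more economical passivizing involution before building $H$.
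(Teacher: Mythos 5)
Your construction of the Hamiltonian, the bandwidth check, and the computation showing that $e^{-iTH}$ implements the involution (up to phase) at $T=\tauqsl$ all match the paper's proof, and your reduction step (deleting transpositions that swap indices with equal $A$-eigenvalues or equal $\rho_i$-eigenvalues) is exactly the paper's first move. But the argument as written has a genuine gap, and it is precisely the one you flag yourself: you assert that the reduced involution has ``exactly $\delta/2$'' transpositions without proving it, and you even concede that if it has more, ``the plan first has to produce a more economical passivizing involution'' --- which is to say the plan is not yet a proof. What is actually needed is the inequality $\delta\geq 2m$, where $m$ is the number of transpositions surviving the reduction; given that, the constructed Hamiltonian passivizes in time $\pi\sqrt{2m}/(2\omega)\leq\pi\sqrt{\delta}/(2\omega)=\tauqsl$, and Proposition~\ref{lowerbound} closes the sandwich (the reverse inequality $\delta\leq 2m$ then comes for free from the same sandwich, so one should not try to prove equality directly).

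The paper closes this gap with a combinatorial argument you would need to supply. Order the reduced transpositions and perform them one at a time, producing a chain of states $\rho_0=\rho_i,\rho_1,\dots,\rho_m=\rho_\sigma$; since the supports are disjoint, the change in $\EE_A$ decomposes as a sum of independent contributions $(a_{k^j_2}-a_{k^j_1})(p_{k^j_1}-p_{k^j_2})$, each nonzero after the reduction. None of these can be positive, since dropping such a transposition would yield a state of strictly lower $\EE_A$ than $\rho_\sigma$, contradicting passivity; hence every surviving transposition strictly moves an eigenvalue of $\rho_i$ across an eigenvalue boundary of $A$ into a component where it does not appear in the passive state. This is what certifies that each of the $2m$ indices touched by the reduced involution contributes to the discrepancy, i.e.\ $\delta\geq 2m$. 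Without this step (or an equivalent accounting), the time you obtain from your Hamiltonian is $\pi\sqrt{2m}/(2\omega)$ with no guarantee that it does not exceed $\tauqsl$, and the proposition is not established.
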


\begin{proof}
Let $\sigma$ be a passivizing involution. Reduce $\sigma$ by replacing each transposition $(k_1\,k_2)$ of $\sigma$ for which $a_{k_1}=a_{k_2}$ or $p_{k_1}=p_{k_2}$ holds by a pair of trivial cycles $(k_1)(k_2)$. The reduced $\sigma$ is also passivizing.

Let $c_1,c_2,\dots, c_m$ be the transpositions of the reduced $\sigma$. Write $c_j=(k^j_1\,k^j_2)$, with $k^j_1 < k^j_2$, and re-index the transpositions so that every $c_j$ for which $p_{k^j_1} < p_{k^j_2}$ holds has a lower index than all those $c_j$s for which $p_{k^j_1} > p_{k^j_2}$ holds. The Hamiltonian
\begin{equation}
	H=\frac{\omega}{\sqrt{2m}} \sum_{j=1}^m \big( \ketbra{k^j_2}{k^j_1} +\ketbra{k^j_1}{k^j_2} \big)
\end{equation} 
satisfies the bounded bandwidth condition and implements the passivizing unitary $-iP_\sigma$ in the time $\pi\sqrt{2m}/2\omega$. We will show that $\delta\geq 2m$. The opposite inequality follows from Proposition \ref{lowerbound}. 

For each $j$, let $P_{j}$ be the unitary operator which interchanges $\ket{k^j_1}$ and $\ket{k^j_2}$ and leaves all the other computational basis vectors invariant. Set $\rho_0=\rho_i$ and inductively define $\rho_j=P_{j}\rho_{j-1}P_{j}^\dagger$. The sequence of $\rho_j$s starts at $\rho_i$ and ends at the passive state $\rho_\sigma$. Moreover,
\begin{equation}\label{lessandless}
	\EE_A(\rho_j)=\EE_A(\rho_{j-1})+(a_{k^j_2}-a_{k^j_1})(p_{k^j_1}-p_{k^j_2}).
\end{equation}
The second term on the right is positive if $p_{k^j_1} > p_{k^j_2}$ and negative if $p_{k^j_1} < p_{k^j_2}$. The former situation is, however, excluded by the selected order of the transpositions. Otherwise, the final state is not passive. Hence, the sequence of expectation values $\EE_A(\rho_j)$ is monotonically decreasing. This, in turn, means that each  $p_{k^j_1}$ belongs to the spectrum of different components of $\rho_i$ and $\rho_\sigma$, and similarly for $p_{k^j_2}$. We conclude that $\delta\geq 2m$.
\end{proof}

In the proof of Proposition \ref{2cycles}, we chose to implement $-iP_\sigma$ rather than $P_\sigma$ because the latter does not belong to the passivizing unitaries closest to the identity. Explicitly, the distance from $\1$ to $P_\sigma$ equals $\pi\sqrt{m}$, which is $\sqrt{2}$ times greater than the distance from $\1$ to $\P(\rho_i)$.

\begin{ex}\label{exempt}
Suppose that $A$ has only two, possibly degenerate, eigenvalues. The incoherent states then have two components. Let $m$ be the number of eigenvalues that the first component of $\rho_i$ does not have in common with a passive state. Then the number of eigenvalues that the second component does not have in common with a passive state is also $m$, and $\delta=2m$. Match the `errant' eigenvalues in pairs such that each pair contains one eigenvalue from the first component and one eigenvalue from the second component of $\rho_i$. The indices of the eigenvalues in each pair define a transposition. Let $\sigma$ be the product of the so obtained transpositions times the trivial cycles whose elements are the indices of the unpaired eigenvalues. The permutation $\sigma$ is an involution, and the passivization time thus equals $\tauqsl$.
\end{ex}

The next example is interesting from a thermodynamic perspective. We shall return to this fact in Section \ref{charging}.

\begin{ex}\label{active} 
If $\rho_i$ is maximally active, the sequence of eigenvalues of $\rho_i$ is non-decreasing. The state with the reversed spectrum, $\rho_p=\sum_k p_{n-k+1}\ketbra{k}{k}$, is passive, and the discrepancy of $\rho_i$ equals $\delta=2m$ where $m$ is the greatest integer such that $p_m<p_{n-m+1}$ and $a_m<a_{n-m+1}$.\footnote{We assume that neither $A$ nor $\rho_i$ is proportional to $\1$ for then $\rho_i$ is already passive.} Defining $\sigma$ as
\begin{equation}
	\sigma = (1\; n)(2\; n-1) \cdots
	(\tfrac{\delta}{2}\; n-\tfrac{\delta}{2}+1)
	(\tfrac{\delta}{2}+1) \cdots (n-\tfrac{\delta}{2}),
\end{equation}
then $\sigma$ is passivizing and $\rho_p=P_\sigma\rho_i P_\sigma^\dagger$. The Hamiltonian
\begin{equation}
	H = \frac{\omega}{\sqrt{\delta}} \sum_{k=1}^{\delta/2} 
	\big( \ketbra{n-k+1}{k} + \ketbra{k}{n-k+1} \big)
\end{equation}
is time-optimal and implements $-iP_{\sigma}$ in the time $\tauqsl$.
\end{ex}

\subsection{Assisted passivization}
A \emph{passivization catalyst} is an auxiliary quantum system used to reduce the passivization time of a system. The catalyst is allowed to transform with the system. But as the system develops into a passive state, the catalyst must return to its original, uncorrelated state. As we will see, allowing correlations between the two in the meantime can reduce the passivization time.

To derive a bandwidth bound of the composite system that allows for a fair comparison between the transformation time of a catalyzed and that of an uncatalyzed passivizing transformation, consider a system in a state $\rho_i$ which is coupled to, but uncorrelated with, a catalyst in a state $\rho_c$. Assume that the system and the catalyst in the time $\tau$ evolve in parallel to $\rho_p\otimes\rho_c$, with $\rho_p$ being passive, according to a von Neumann equation with Hamiltonian $H_{sc}(t)=H_s(t)+ H_c(t)$. Furthermore, assume that the bandwidth of $H_s(t)$ is bounded from above by $\omega^2$. Then $\tau$ is greater than or equal to the system's passivization time $\taumin$. And only if $H_s(t)$ is time-optimal and $H_c(t)$ is suitably adjusted, $\tau$ can be equal to $\taumin$. In that case, the bandwidth of $H_{sc}(t)$ is $n_c\omega^2+n\tr(H_c(t)^2)$, where $n_c$ and $n$ is the dimension of the catalyst and the system, respectively. Here we have used that time-optimal Hamiltonians are traceless; see Section \ref{traceless}. We thus formulate the bounded bandwidth condition for assisted transformations as
\begin{equation}\label{assbbc}
	\tr\big(H_{sc}(t)^2\big)\leq n_c\omega^2.
\end{equation}

We define \emph{the assisted passivization time} $\tauamin$ as the shortest time in which a system can be transformed into a passive state in a catalyzed process governed by a Hamiltonian satisfying \eqref{assbbc}. Moreover, for a system in a state with discrepancy $\delta$, we define \emph{the assisted QSL}~as
\begin{equation}
	\tauaqsl=\frac{\pi}{2\omega}\sqrt{\frac{\delta}{n_c}}.
\end{equation}
The next two propositions say that the assisted passivization time is at least $\tauaqsl$ but not greater than $\taumin/\sqrt{n_c}$.

\begin{prop}\label{assistedupperbound}
The assisted passivization time is not greater than $\taumin/\sqrt{n_c}$.
\end{prop}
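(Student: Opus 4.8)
The plan is to prove the bound constructively: starting from a time-optimal Hamiltonian for the uncatalyzed problem, I would exhibit a single explicit catalyzed process that passivizes $\rho_i$ in time $\taumin/\sqrt{n_c}$. The guiding observation is that the bandwidth budget allowed for the composite by \eqref{assbbc} is $n_c$ times the budget the system alone would have; if one funnels this entire budget onto a one-dimensional ``channel'' of the catalyst, the system gets driven by a Hamiltonian whose square has trace $n_c\omega^2$ rather than $\omega^2$ — that is, one that is $\sqrt{n_c}$ times faster than the uncatalyzed time-optimal generator — which compresses the passivization time by the same factor.

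To make this precise, I would take a time-optimal Hamiltonian $H$ for $\rho_i$; by Proposition \ref{saturates} it satisfies $\tr(H^2)=\omega^2$, and by definition $e^{-i\taumin H}\rho_i e^{i\taumin H}$ is passive. Then I would fix an arbitrary unit vector $\ket\psi$ in the catalyst's Hilbert space, set $\rho_c=\ketbra\psi\psi$, and propose the composite Hamiltonian
\[ H_{sc}=\sqrt{n_c}\,H\otimes\ketbra\psi\psi . \]
The first step is to check admissibility: since $\ketbra\psi\psi$ is a rank-one projection, $\tr(H_{sc}^2)=n_c\,\tr(H^2)\,\tr(\ketbra\psi\psi)=n_c\omega^2$, so \eqref{assbbc} holds (with equality).

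The second step is to read off the generated dynamics. Because $\ketbra\psi\psi$ is idempotent, $H_{sc}^k=(\sqrt{n_c}H)^k\otimes\ketbra\psi\psi$ for every $k\geq1$, hence $e^{-itH_{sc}}=\1\otimes\1_c+\big(e^{-i\sqrt{n_c}\,tH}-\1\big)\otimes\ketbra\psi\psi$. A short conjugation calculation then yields $e^{-itH_{sc}}(\rho_i\otimes\rho_c)e^{itH_{sc}}=\big(e^{-i\sqrt{n_c}\,tH}\rho_i e^{i\sqrt{n_c}\,tH}\big)\otimes\rho_c$. Evaluating at $t=\taumin/\sqrt{n_c}$ makes the first factor a passive state while the second is the catalyst's original, uncorrelated state, so the composite has been driven to (passive)$\,\otimes\,\rho_c$ in time $\taumin/\sqrt{n_c}$. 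This is a valid catalyzed passivization of that duration, whence $\tauamin\leq\taumin/\sqrt{n_c}$.

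I do not anticipate a genuine obstacle: both computations (the bandwidth identity and the exponential of a Hamiltonian coupled through a rank-one projection) are immediate. The only point deserving an explicit sentence is confirming that the constructed process really counts as a catalyzed passivization in the intended sense — that the terminal composite state factorizes into a passive state times the catalyst's initial state — which it does exactly, the catalyst in fact staying decoupled at every instant. It is worth noting in a remark that the upper bound is therefore saturated \emph{without} ever creating correlations between system and catalyst; correlations enter on the lower-bound/tightness side and in the multipartite analysis, not in this direction.
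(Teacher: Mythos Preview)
Your proof is correct and follows essentially the same construction as the paper: take a time-optimal $H$ for the system, a pure catalyst state $\ketbra{\psi}{\psi}$, and set $H_{sc}=\sqrt{n_c}\,H\otimes\ketbra{\psi}{\psi}$, then check the bandwidth and the resulting evolution. Your added details (the explicit exponentiation via idempotence and the remark that no correlations are ever created) are fine elaborations but not departures from the paper's argument.
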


\begin{proof}
Let $H_s$ be a time-optimal Hamiltonian for the system, transforming $\rho_i$ into $\rho_p$ in the time $\taumin$, and let $\ket{\psi}$ be the pure state of an $n_c$-dimensional catalyst. Define a Hamiltonian for the combined system and catalyst as $H_{sc}=\sqrt{n_c}H_s\otimes\ketbra{\psi}{\psi}$. The Hamiltonian $H_{sc}$ has bandwidth $n_c\omega^2$ and transforms $\rho_i\otimes\ketbra{\psi}{\psi}$ into $\rho_p\otimes\ketbra{\psi}{\psi}$ in the time $\taumin/\sqrt{n_c}$.
\end{proof}

\begin{prop}\label{assistedlowerbound}
The assisted QSL $\tauaqsl$ lower bounds the assisted passivization time.
\end{prop}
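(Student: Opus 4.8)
The plan is to mimic the proof of Proposition~\ref{lowerbound} on the composite Hilbert space $\H\otimes\H_c$, with the bound $n_c\omega^2$ from \eqref{assbbc} taking the place of $\omega^2$. Let $H_{sc}(t)$ be any Hamiltonian satisfying \eqref{assbbc} that drives $\rho_i\otimes\rho_c$ in some time $\tau$ to a state $\rho_p\otimes\rho_c$ with $\rho_p$ passive, and let $U(t)$ be the associated time-evolution operator; write $U=U(\tau)$. Fix an orthonormal eigenbasis $\ket{e_1}_c,\dots,\ket{e_{n_c}}_c$ of $\rho_c$ and combine it with the computational basis; the $nn_c$ vectors $\ket{k}\otimes\ket{e_l}_c$ form an orthonormal eigenbasis of both $\rho_i\otimes\rho_c$ and $A\otimes\1_c$. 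Regard each $\ket{(k,l)(t)}=U(t)\bigl(\ket{k}\otimes\ket{e_l}_c\bigr)$ as a curve on the unit sphere of $\H\otimes\H_c$; its speed at time $t$ is $\|H_{sc}(t)\ket{(k,l)(t)}\|$. Because $H_{sc}(t)$ need not be time-independent here, I would bound the length of this curve by Cauchy--Schwarz,
\begin{equation*}
	\mathrm{length}_{kl}^2 \leq \tau\int_0^\tau \bra{(k,l)(t)}H_{sc}(t)^2\ket{(k,l)(t)}\,\dt,
\end{equation*}
and then sum over $(k,l)$: since $\{\ket{(k,l)(t)}\}_{k,l}$ is an orthonormal basis for each fixed $t$ and $H_{sc}(t)^2\geq0$, the right-hand sides add up to $\tau\int_0^\tau\tr\bigl(H_{sc}(t)^2\bigr)\dt\leq n_c\omega^2\tau^2$, whereas every curve with orthogonal endpoints contributes at least $\pi^2/4$ to the left-hand side. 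Hence, if $N$ of the product basis vectors are carried into orthogonal states, then $N\pi^2/4\leq n_c\omega^2\tau^2$, i.e.\ $\tau\geq\frac{\pi}{2\omega}\sqrt{N/n_c}$.

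The remaining step, and the hard one, is to show $N\geq\delta$; I would obtain this from a discrepancy count like the one in the proof of Proposition~\ref{lowerbound}, but carried out inside the eigenspaces $\A_i\otimes\H_c$ of $A\otimes\1_c$. Since $\rho_p$ is passive it commutes with $A$, so $\rho_p\otimes\rho_c$ is incoherent for $A\otimes\1_c$; the restriction of $\rho_i\otimes\rho_c$ to $\A_i\otimes\H_c$ has eigenvalue multiset $S_i\times Q$, while that of $\rho_p\otimes\rho_c=U(\rho_i\otimes\rho_c)U^\dagger$ has multiset $T_i\times Q$, where $S_i$, $T_i$ are the eigenvalue multisets of the $i$th components of $\rho_i$ and of a passive state and $Q$ is that of $\rho_c$. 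The part of $\A_i\otimes\H_c$ that $U$ maps back into $\A_i\otimes\H_c$ is invariant under $\rho_i\otimes\rho_c$ and carries only eigenvalues lying in both $S_i\times Q$ and $T_i\times Q$, so its dimension is at most $|(S_i\times Q)\cap(T_i\times Q)|$; arguing, as in Proposition~\ref{lowerbound}, that the complementary product vectors are displaced to orthogonal states yields $N\geq\sum_i|(S_i\times Q)\setminus(T_i\times Q)|$. The crux is then the purely combinatorial inequality that this composite discrepancy is never smaller than $\delta=\sum_i|S_i\setminus T_i|$. Here the fine structure of passive states enters: each $T_i$ is a contiguous block of the decreasingly sorted total spectrum, so the signed multiplicity function $S_i-T_i$ changes sign only once for the lowest and highest blocks (at most twice for the interior ones). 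For the extreme blocks a short cumulative-sum estimate finishes the job --- the cumulative sum of $S_i-T_i$ is of one sign, and convolving it with the non-negative, integer-valued multiplicity function $Q$ cannot lower its peak, hence cannot lower its $\ell^1$ norm; the interior blocks require a more careful matching of misplaced eigenvalues to vacant slots, and this is where I expect the main difficulty to lie. (Rank-one $\rho_c$ saturates the bound, as it must by Proposition~\ref{assistedupperbound}.)

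Granting $N\geq\delta$, the estimate of the first paragraph gives $\tau\geq\frac{\pi}{2\omega}\sqrt{\delta/n_c}=\tauaqsl$. Since $H_{sc}(t)$, $\tau$, and $\rho_c$ were those of an arbitrary admissible assisted passivization, this establishes $\tauamin\geq\tauaqsl$.
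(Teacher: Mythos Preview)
Your overall architecture is correct and matches the paper's: bound the total squared path length on the unit sphere of $\H\otimes\H_c$ by $n_c\omega^2\tau^2$, and then show that at least $\delta$ of the product basis vectors must travel a spherical distance $\geq\pi/2$. Your Cauchy--Schwarz argument in place of the paper's reduction to a time-independent optimal $H_{sc}$ is a legitimate and arguably cleaner variant; both routes yield $N\pi^2/4\leq n_c\omega^2\tau^2$.

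The gap is in the combinatorial step $N\geq\delta$, which you correctly flag as the crux but leave unfinished---and you are making it harder than it is. Your distinction between ``extreme'' and ``interior'' blocks is a red herring. For \emph{every} $k$, each errant eigenvalue $p$ of $\rho_{i;k}$ (one not in the spectrum of $\rho_{p;k}$) is either strictly above or strictly below \emph{all} eigenvalues of $\rho_{p;k}$: since $T_k$ is a contiguous segment of the sorted total spectrum and $p$ itself belongs to that spectrum, $p$ cannot lie strictly inside the segment without being a member of it. So the ``at most two sign changes'' you observe for interior blocks is no obstacle; the ``above'' and ``below'' errant values can simply be handled in parallel.

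The paper then finishes with a direct construction that replaces your cumulative-sum/convolution estimate entirely. Let $q_1$ be the largest and $q_2$ the smallest \emph{nonzero} eigenvalue of $\rho_c$. If an errant $p$ exceeds every eigenvalue of $\rho_{p;k}$, then $pq_1$ exceeds every eigenvalue of $\rho_{p;k}\otimes\rho_c$; if $p$ lies below them all, then $pq_2$ does. Either way the corresponding product basis vector is forced out of $\A_k\otimes\H_c$ by $U$ and hence lands orthogonal to itself. This exhibits $\delta_k$ displaced product vectors for each $k$, uniformly and with no special treatment of interior blocks; summing over $k$ gives $N\geq\delta$ and hence $\tau\geq\tauaqsl$.
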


\begin{proof}
Let $\rho_i$ and $\rho_p$ be the initial and a passive state of the system, respectively, and let $\rho_c$ be the state of an $n_c$-dimensional catalyst. Since $\rho_i$ and $\rho_p$ are incoherent relative to $A$, the product states $\rho_i\otimes \rho_c$ and $\rho_p\otimes \rho_c$ are incoherent relative to $A\otimes \1$. Moreover, the $k$th components of $\rho_i\otimes \rho_c$ and $\rho_p\otimes \rho_c$ are $\rho_{i;k}\otimes \rho_c$ and $\rho_{p;k}\otimes \rho_c$, respectively, where $\rho_{i;k}$ and $\rho_{p;k}$ are the $k$th components of $\rho_i$ and $\rho_p$. 
	
Let $\delta_k$ be the discrepancy between $\rho_{i;k}$ and $\rho_{p;k}$ and let $\delta^c_k$ be the discrepancy between $\rho_{i;k}\otimes \rho_c$ and $\rho_{p;k}\otimes \rho_c$. Then $\delta^c_k\geq \delta_k$. To see this, let $p_{i_1}, p_{i_2},\dots, p_{i_{\delta_k}}$ be the eigenvalues of $\rho_{i;k}$ that are not present in the spectrum of $\rho_{p;k}$. Each of these eigenvalues is either strictly greater than or strictly smaller than all of $\rho_{p;k}$'s eigenvalues. Otherwise, $\rho_p$ would not be passive. Organize the differing eigenvalues of $\rho_{i;k}$ so that the first $l$ ones are greater than and the last $\delta_k-l$ ones are smaller than all the eigenvalues of $\rho_{p;k}$. Furthermore, let $q_1$ be the greatest and $q_2$ be the smallest non-zero eigenvalue of $\rho_c$. (If $\rho_c$ is a pure state, then set $q_1=q_2=1$.) Neither of the eigenvalues $p_{i_1}q_1, p_{i_2}q_1, \dots, p_{i_{l}}q_1,p_{i_{l+1}}q_2, p_{i_{l+2}}q_2, \dots, p_{i_{\delta_k}}q_2$ of $\rho_{i;k}\otimes \rho_c$ is then present in the spectrum of $\rho_{p;k}\otimes \rho_c$. These are $\delta_k$ in number and, hence, $\delta^c_k\geq\delta_k$. 

Let $H_{sc}(t)$ be a Hamiltonian that satisfies \eqref{assbbc} and which, among such Hamiltonians, transforms $\rho_i\otimes \rho_c$ into $\rho_p\otimes \rho_c$ in the shortest time. (Without loss of generality, we can assume that this time equals $\tauamin$ since $\rho_p$ is unspecified.) Using arguments identical to the ones used in the uncatalyzed case, one can show that $\tr(H_{sc}(t)^2)=n_c\omega^2$ for all $t$ and that $H_{sc}(t)$ is time-independent.

Write $\delta^c$ for the sum of all the $\delta^c_k$s. Let $\ket{k\,l}$ be the product of the $k$th vector in the computational basis for the system and the $l$th vector in an eigenbasis of $\rho_c$. The trajectory formed when $\ket{k\,l}$ is affected by $H_{sc}$ has the constant speed $\bra{k\,l}H_{sc}^2\ket{k\,l}^{1/2}$. Furthermore, for at least $\delta^c$ such vectors, the trajectory has a length greater than or equal to $\pi/2$. Consequently,
\begin{equation}
	\omega^2n_c\tauamin^2
	\geq \sum_{k=1}^{n_s}\sum_{l=1}^{n_c}\bra{k\,l}H_{sc}^2\ket{k\,l}\tauamin^2
	\geq \frac{\pi^2\delta^c}{4}.
\end{equation}
The proposition follows from the inequality $\delta^c\geq \delta$.
\end{proof}

By Propositions \ref{assistedupperbound} and \ref{assistedlowerbound}, $\tauamin=\tauaqsl$ if $\taumin=\tauqsl$. Such is the case, for example, if the initial state can be passivized by an involution.

\begin{ex}
A system in a maximally active state can be passivized in the time $\tauqsl$ and be assisted passivized in the time $\tauqsl/\sqrt{n_c}$ using an $n_c$-dimensional catalyst.
\end{ex}	

\subsection{Collective passivization}\label{collpas}
Assume that $A$ has a non-degenerate spectrum. Then there is but one passive state $\rho_p$. In this section, we consider $N$ copies of the system prepared in the product state $\rho_i^{\otimes N}=\rho_i\otimes\rho_i\otimes\dots\otimes\rho_i$, and we ask if it is possible to transform $\rho_i^{\otimes N}$ into $\rho_p^{\otimes N}=\rho_p\otimes\rho_p\otimes\dots\otimes\rho_p$ in a time shorter than the single copy passivization time $\taumin$ using a Hamiltonian that satisfies the bandwidth condition
\begin{equation}\label{collectivebbcondition}
	\tr\big(H(t)^2\big)\leq \omega^2 N n^{N-1}.
\end{equation}
The right-hand side equals the bandwidth of an $N$-fold sum of local time-optimal Hamiltonians satisfying \eqref{bbcondition}. We will see that, as in the case of assisted passivization, allowing correlations between the systems can reduce the passivization time to a value smaller than $\taumin$.\footnote{This is true also when all the states that are unitarily equivalent to $\rho_i^{\otimes N}$ are separable \cite{GuBa2002,GuBa2003}.}

\begin{rmk}
Here we consider transformations of $\rho_i^{\otimes N}$ into a specific final state, namely $\rho_p^{\otimes N}$, rather than ``some passive state.'' The state $\rho_p^{\otimes N}$ need not be passive for either $N\cdot A=A+A+\cdots+A$ or $A^{\otimes N}=A\otimes A\otimes \cdots\otimes A$; see \cite{AlFa2013} and Example \ref{neithernor}.
\end{rmk}

The computational basis determines a canonical basis in the $N$-fold tensor product of $\H$. We write $\ket{k_1 k_2 \dots k_N}$ for $\ket{k_1}\otimes \ket{k_2}\otimes\dots\otimes\ket{k_N}$. Then
\begin{align}
	&\rho_i^{\otimes N}=\sum p_{k_1}\!\cdots p_{k_N}\ketbra{k_1 \dots k_N}{k_1 \dots k_N},\label{jaha}\\
	&\rho_p^{\otimes N}=\sum p_{\sigma(k_1)}\!\cdots p_{\sigma(k_N)}\ketbra{k_1 \dots k_N}{k_1 \dots k_N},\label{jahaja}
\end{align}
where $\sigma$ is any permutation that passivizes $\rho_i$.

The sums in Equations \eqref{jaha} and \eqref{jahaja} are over all the sequences $k_1,k_2,\dots,k_N$ one can form from the integers $1,2,\dots,n$. Let $\delta_N$ be the number of such sequences for which $p_{k_1}p_{k_2}\!\cdots p_{k_N}$ and $p_{\sigma(k_1)}p_{\sigma(k_2)}\!\cdots p_{\sigma(k_N)}$ are different. Also, define \emph{the collective passivization time} $\tauNpas$ as the minimum time it takes to transform $\rho_i^{\otimes N}$ into $\rho_p^{\otimes N}$ using a Hamiltonian that satisfies \eqref{collectivebbcondition}. Then
\begin{equation}\label{Nqsl}
	\tauNpas\geq\frac{\pi}{2\omega}\sqrt{\frac{\delta_N}{Nn^{N-1}}}.
\end{equation}
The proof is similar to that of Proposition \ref{lowerbound}: A Hamiltonian that meets the condition \eqref{collectivebbcondition} and transforms $\rho_i^{\otimes N}$ into $\rho_p^{\otimes N}$ in the time $\tauNpas$ is time-independent. Let $H$ be such a Hamiltonian. Then $H$ transforms each product basis vector $\ket{k_1k_2\dots k_N}$ into an eigenvector of $\rho_p^{\otimes N}$. Furthermore, if $p_{k_1}p_{k_2}\cdots p_{k_N}$ and $p_{\sigma(k_1)}p_{\sigma(k_2)}\cdots p_{\sigma(k_N)}$ are different, the length of the trajectory of this vector is at least $\pi/2$. Since the trajectory has the constant speed $\bra{k_1k_2\dots k_N}H^2\ket{k_1k_2\dots k_N}^{1/2}$,
\begin{equation}
	\omega^2 N n^{N-1}\tauNpas^2\geq\tr\big(H^2\big)\tauNpas^2\geq \frac{\pi^2\delta_N}{4}.
\end{equation}

The expression on the right-hand side of \eqref{Nqsl} is \emph{the collective QSL}, which we denote by $\tauNqsl$. The collective passivization time equals $\tauNqsl$ if $\sigma$ is an involution. Because if such is the case, the Hamiltonian
\begin{equation}
\begin{split}
	H = \frac{\pi}{2\tauNqsl} 
		\sum \big(&\ketbra{k_1 \dots k_N}{\sigma(k_1) \dots \sigma(k_N)}\\
		&+\ketbra{k_1 \dots k_N}{\sigma(k_1) \dots \sigma(k_N)}\big),
\end{split}
\end{equation}
where the sum is over all sequences $k_1,k_2,\dots,k_N$ for which $p_{k_1}p_{k_2}\!\cdots p_{k_N}$ and $p_{\sigma(k_1)}p_{\sigma(k_2)}\!\cdots p_{\sigma(k_N)}$ are different, satisfies \eqref{collectivebbcondition} and transforms $\rho_i^{\otimes N}$ into $\rho_p^{\otimes N}$ in the time $\tauNqsl$.

Next, we will calculate the fraction between the single system passivization time and the collective $N$-fold passivization time for systems prepared in maximally active qubits or qutrits. The fraction can be considered as a measure of the advantage of a collective passivization \cite{CaPoBiCeGoViMo2017}. For mixed qubits, the fraction depends explicitly on the parity of $N$, and to simplify the notation we will make use of the parity function 
\begin{equation}\label{parityfunk}
	\wp(k)=\frac{1}{2}\big(1+(-1)^k\big)=\begin{cases}
		0 & \text{if $k$ is odd},\\
		1 & \text{if $k$ is even}.
	\end{cases}
\end{equation}

\begin{ex}\label{collectivequbit}
Suppose that $\rho_i$ is a maximally active qubit state. If $\rho_i$ is pure, then $\delta_N=2$ and
\begin{equation}
	\frac{\taumin}{\tauNpas}=\sqrt{N2^{N-1}}.
\end{equation}
If $\rho_i$ mixed, then $\delta_N=2^N-\wp(N)\binom{N}{N/2}$ and
\begin{equation}\label{Advantage_qubits}
	\frac{\taumin}{\tauNpas}=\sqrt{\frac{N2^N}{2^N-\wp(N)\binom{N}{N/2}}}.
\end{equation}
The formula for $\delta_N$ follows immediately from the observation that $p_1^kp_2^{N-k}=p_2^kp_1^{N-k}$ if, and only if, $2k=N$.

It is apparent that allowing correlations between the qubits during the evolution reduces the passivization time. In Figure \ref{Fig:Qubit},
\begin{figure}[t]
	\centering
	\includegraphics[width=0.45\textwidth]{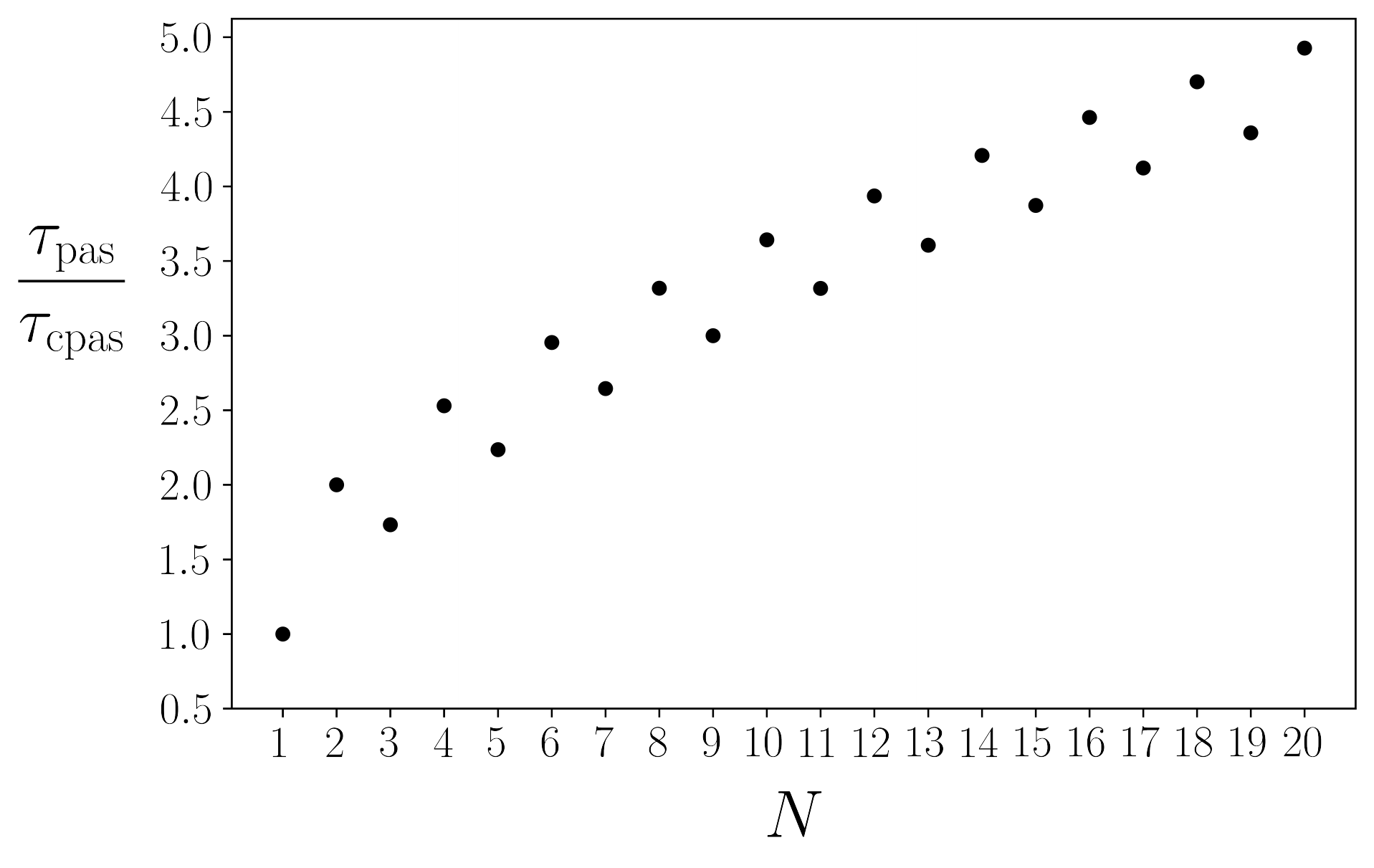}
	\caption{A plot of $\taumin/\tauNpas$ for an ensemble of $N$ mixed qubits. The trend of the plot tells us that the collective passivization time decreases with $N$, but the plot also has a noticeable fluctuating appearance indicating a non-monotonic dependence on $N$; if $N$ is even, the collective passivization time is smaller for $N$ qubits than for $N+1$ qubits but larger than for $N+2$ qubits.} 
	\label{Fig:Qubit}
\end{figure}
we have plotted $\taumin/\tauNpas$ against $N$ for a mixed $\rho_i$. The trend says that the more qubits are involved, the smaller the collective passivization time. However, as the plot also indicates, the decrease in collective passivization time is not monotonic in $N$. Adding a qubit to an ensemble with an even number of qubits increases the passivization time while adding a qubit to an ensemble with an odd number of qubits reduces the passivization time. Another interesting observation is that the asymptotic behavior of $\taumin/\tauNpas$ is very different for a pure and a mixed $\rho_i$. 
\end{ex}

\begin{ex}\label{collectivequtrit}
Suppose that $\rho_i$ is a maximally active qutrit state. If $p_1=0$, then $\delta_N=2(2^N-1)$ and 
\begin{equation}
    \frac{\taumin}{\tauNpas} = \sqrt{\frac{N3^{N-1}}{2^N-1}}.
\end{equation}
If $\rho_i$ has full rank, then $\delta_N=3^N-\sum_{k=0}^{\lfloor N/2\rfloor}\binom{N}{k,k}$ and 
\begin{equation}\label{Advantage_qutrits}
	\frac{\taumin}{\tauNpas}
	= \sqrt{\frac{2N3^{N-1}}{3^N-\sum_{k=0}^{\lfloor N/2\rfloor} \binom{N}{k,k}}}.
\end{equation}
The upper limit $\lfloor N/2\rfloor$ is the  greatest integer less than or equal to $N/2$, and $\binom{N}{k,k}$ is the trinomial coefficient $N!/k!k!(N-2k)!$. Figure \ref{Fig:Qutrit}
\begin{figure}[t]
	\centering
	\vspace{-2.5pt}
	\includegraphics[width=0.45\textwidth]{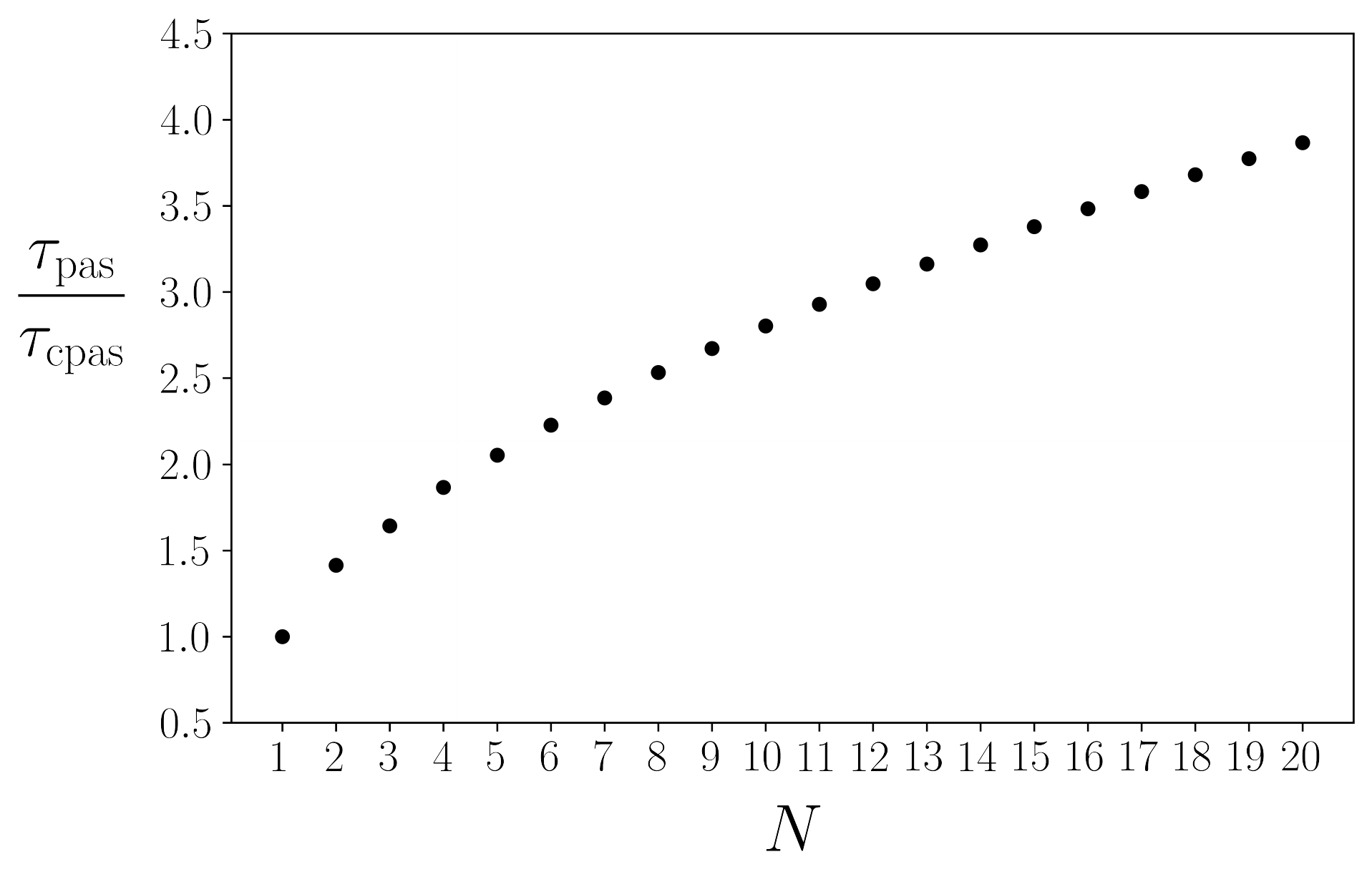}
	\caption{A plot of $\taumin/\tauNpas$ for an ensemble of $N$ full rank qutrits. Unlike the case for mixed qubits, the collective passivization time decreases monotonically with $N$.}
	\label{Fig:Qutrit}
\end{figure}
shows a plot of $\taumin/ \tauNpas$ against $N$ for a full rank $\rho_i$. In this case, the collective passivization time \emph{is} monotonic in $N$.
\end{ex}
 
For non-maximally active qutrit states, $\delta_N$ will depend on the spectrum. This is also the case for higher dimensional systems, even for maximally active initial states. We finish this section with two examples concerning the passivity of $\rho_p^{\otimes N}$. To distinguish the multipartite system's passivization time from that of the single-part system we call the former \emph{the global passivization time}.

\begin{ex}
If $\rho_i$ is a maximally active qubit, then $\rho_i^{\otimes N}$ is maximally active and $\rho_p^{\otimes N}$ is passive for both $N\cdot A$ and $A^{\otimes N}$. Furthermore, the collective passivization time is equal to the global passivization time.
\end{ex}

\begin{ex}\label{neithernor}
Suppose that $\rho_i$ is a maximally active qutrit state whose spectrum is such that
\begin{equation}
	p_1^2 < p_1p_2 < p_2^2 < p_1p_3 < p_2p_3 < p_3^2.
\end{equation}
Furthermore, suppose that the spectrum of $A$ is such~that 
\begin{align}
	&2a_1<a_1+a_2<2a_2<a_1+a_3<a_2+a_3<2a_3,\\
	&a_1^2<a_1a_2<a_2^2<a_1a_3<a_2a_3<a_3^2.
\end{align} 
Then $\rho_i^{\otimes 2}$ is a maximally active state for both $2\cdot A$ and $A^{\otimes 2}$. However, $\rho_p^{\otimes 2}$ is not a passive state for either $2\cdot A$ or $A^{\otimes 2}$. The discrepancy of $\rho_i^{\otimes 2}$ is $8$ and, hence, the global passivization time is $\pi/\omega\sqrt{3}$. Also, $\delta_2=6$ and $\tauNqsl=\pi/2\omega$. The collective passivization time is thus smaller than the global passivization time.
\end{ex}

\section{The non-degenerated case}\label{the non-deg case}
Suppose that $A$ and $\rho_i$ have non-degenerate eigenvalue spectra. Let $\rho_p$ be the unique passive state and let $\sigma$ be the unique permutation such that $\rho_p=\sum_k p_{\sigma(k)}\ketbra{k}{k}$. In this case, the lengths of the cycles of $\sigma$ determine the passivization time. 

\begin{prop}\label{nondegenerate}
Suppose that $\sigma$ decomposes into $m$ cycles of lengths $l_1,l_2,\dots,l_m$. Then
\begin{equation}\label{bound}
	\taumin = \frac{\pi}{\sqrt{3}\omega}\sqrt{n-\sum_{j=1}^m \frac{1}{l_j}}.
\end{equation}
\end{prop}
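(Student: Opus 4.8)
The plan is to evaluate the right-hand side of \eqref{taumin} directly. Since $A$ and $\rho_i$ are non-degenerate and simultaneously diagonal in the computational basis, their isotropy groups $\U(\H)_A$ and $\U(\H)_{\rho_i}$ both equal the group $T$ of unitaries diagonal in that basis. As $P_\sigma\rho_iP_\sigma^\dagger=\rho_p$, Proposition \ref{passivizing set} gives $\P(\rho_i)=\{UP_\sigma V:U,V\in T\}$, and one checks that this is precisely the set of unitaries $W$ whose nonzero entries are arbitrary phases located in the positions $(k,\sigma(k))$. So the task reduces to minimizing $\|\Log W\|$ over this family of ``phased permutation'' matrices.

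Next I would exploit the cycle structure of $\sigma$. Since $\sigma$ is the disjoint product of cycles of lengths $l_1,\dots,l_m$, each such $W$ is block diagonal with one block $M_c$ per cycle; hence $\|\Log W\|^2=\sum_c\|\Log M_c\|^2$, and the entry phases in different blocks are independent. For a cycle of length $l$ the block $M$ is a phased cyclic shift, so $M^l=e^{i\Phi}\1$ where $\Phi$ is the sum of the entry phases around the cycle, and the diagonal factors in $U$ and $V$ allow $\Phi$ to be prescribed arbitrarily. Thus the eigenvalues of $M$ are $e^{i(\Phi+2\pi j)/l}$, $j=0,\dots,l-1$, and, writing $[\theta]$ for the reduction of $\theta$ into $(-\pi,\pi]$,
\[
	\|\Log W\|^2=\sum_{c=1}^m f(l_c,\Phi_c),\qquad
	f(l,\Phi)=\sum_{j=0}^{l-1}\Bigl[\tfrac{\Phi+2\pi j}{l}\Bigr]^2,
\]
to be minimized over each $\Phi_c$ independently.

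The heart of the argument is the elementary optimization $\min_\Phi f(l,\Phi)$. The $l$ numbers $[\,(\Phi+2\pi j)/l\,]$ are equally spaced on the circle with gap $2\pi/l$, and the sum of their squares is smallest when this constellation is centered at $0$, namely at the points $\{2\pi j/l:|j|\le(l-1)/2\}$ for odd $l$ and $\{\pm\pi/l,\pm3\pi/l,\dots,\pm(l-1)\pi/l\}$ for even $l$. A short sum-of-squares computation then gives $\min_\Phi f(l,\Phi)=\tfrac{\pi^2}{3}\bigl(l-\tfrac1l\bigr)$ for both parities. Summing over the cycles and using $\sum_jl_j=n$ yields $\min_W\|\Log W\|^2=\tfrac{\pi^2}{3}\bigl(n-\sum_j 1/l_j\bigr)$, and \eqref{taumin} then gives \eqref{bound}; the minimizing $W$ also provides an explicit time-independent time-optimal Hamiltonian.

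The step I expect to be the main obstacle is proving that the centered constellation genuinely minimizes $f(l,\cdot)$, and not merely that it is a critical configuration, with the wraparound at $\pm\pi$ handled correctly when $l$ is even. A clean route is to note that in the variable $\phi=\Phi/l$ the function $f$ is even and $2\pi/l$-periodic, that it is piecewise smooth with jumps only where some argument crosses $\pm\pi$, and that between jumps $\tfrac{d}{d\phi}f$ is proportional to the sum of the current representatives; this forces the interior critical points to be the centered configurations, and comparing values across the single jump per period identifies them as the minima.
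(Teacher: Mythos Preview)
Your proposal is correct and follows essentially the same route as the paper: use Proposition~\ref{passivizing set} and non-degeneracy to identify $\P(\rho_i)$ with the phased permutation matrices supported on $\sigma$, block-decompose along cycles, note that each $l\times l$ block has $l$th power $e^{i\Phi}\1$ so its eigenvalues are the $l$th roots of $e^{i\Phi}$, and minimize the sum of squared principal arguments over~$\Phi$. The only cosmetic difference is in that last optimization: the paper minimizes the quadratic $\theta\mapsto l^{-2}\sum_{r=0}^{l-1}(\theta+2\pi r)^2$ directly to get $\theta=\pi(1-l)$ and then checks \emph{a posteriori} that all phases $(\theta+2\pi r)/l$ land in the principal branch, whereas you keep the branch brackets throughout and argue geometrically by centering the equally spaced constellation, treating the parities separately; both give $\pi^2(l^2-1)/3l$.
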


The proposition covers, for example, the case when the observable $A$ is non-degenerate, the initial state is prepared by measuring $A$, and the outcomes of the measurement are obtained with different frequencies.

\begin{ex}
Proposition \ref{nondegenerate} suggests that there are systems for which the passivization time is greater than $\tauqsl$. Consider, for example, a qutrit system for which $a_1<a_2<a_3$ and $p_2<p_1<p_3$ hold. The passive state is $\rho_p=p_3\ketbra{1}{1}+p_1\ketbra{2}{2}+p_2\ketbra{3}{3}$, and the permutation that transforms $\rho_i$ to $\rho_p$ is the $3$-cycle $(1,3,2)$. According to Proposition \ref{nondegenerate}, $\taumin=\pi\sqrt{8}/3\omega$, but $\tauqsl=\pi\sqrt{3}/2\omega$. Thus $\taumin>\tauqsl$. A time-optimal Hamiltonian that transforms $\rho_i$ to $\rho_p$ is 
\begin{equation}
	H=\frac{i\omega}{\sqrt{6}}
	\big( (\ket{2}-\ket{3})\bra{1} + (\ket{3}-\ket{1})\bra{2} + (\ket{1}-\ket{2})\bra{3} \big).
\end{equation}
\end{ex}

\begin{proof}[Proof of Proposition \ref{nondegenerate}]
Let $c_1,c_2,\dots,c_m$ be the cycles of $\sigma$ and let $\H_j$ be the linear span of those vectors in the computational basis whose labels are permuted by $c_j$. The $\H_j$s are mutually orthogonal and span $\H$. Also, the $\H_j$s are invariant for $P_\sigma$ and the isotropy groups of $A$ and $\rho_i$. Let $P_{c_j}$ be the restriction of $P_\sigma$ to $\H_j$.  

Every unitary $U$ that commutes with $A$ and every unitary $V$ that commutes with $\rho_i$ decomposes as $U=\oplus_j U_j$ and $V=\oplus_j V_j$, respectively, with $U_j$ and $V_j$ being operators of $\H_j$. Moreover,
\begin{equation}\label{enkelt}
	\|\Log(UP_\sigma V)\|^2 = \sum_{j=1}^m \|\Log(U_jP_{c_j} V_j)\|^2.
\end{equation}
According to \eqref{taumin} and Proposition \ref{passivizing set}, we can determine the passivization time by minimizing the terms on the right-hand side of \eqref{enkelt}.

The non-degeneracy of $A$ and $\rho_i$ implies that $U_j$ and $V_j$ are diagonal relative to the computational basis vectors that span $\H_j$. Write $c_j=(k_1, k_2, \dots, k_{l_j})$ and let $e^{i\alpha_r}$ be the eigenvalue of $U_j$, and $e^{i\beta_r}$ be the eigenvalue of $V_j$, associated with $\ket{k_r}$. The minimal polynomial of $U_jP_{c_j}V_j$ is $x^{l_j}-e^{i\theta_j}$ where $\theta_j=\sum_r(\alpha_r+\beta_r)\bmod{2\pi}$. Hence the eigenvalues of $U_jP_{c_j}V_j$ are $\lambda_r=e^{i(\theta_j+2\pi r)/l_j}$, where $r$ runs from $0$ to $l_j-1$. It follows that 
\begin{equation}
\begin{split}
	&\hspace{-3pt}\min_{U_j,V_j}\|\Log(U_jP_{c_j} V_j)\|^2 
	= \min_{\theta_j}\frac{1}{l_j^2} \sum_{r=0}^{l_j-1} (\theta_j+ 2\pi r)^2 \\
	&\hspace{-3pt}= \min_{\theta_j}\frac{1}{l_j} \Big(\hspace{-1pt}\theta_j^2 + 2\pi(l_j-1)\theta_j + \frac{2\pi^2}{3}(2l_j^2-3l_j+1)\hspace{-1pt}\Big)\\
	&\hspace{-3pt}=\frac{\pi^2(l_j^2-1)}{3l_j}.
\end{split}
\end{equation}
The minimum is attained for $\theta_j=\pi(1-l_j)$, which also meets the requirement that all the phases $(\theta_j+2\pi r)/l_j$ belong to the principal branch. We conclude that
\begin{equation}
	\taumin
	= \frac{\pi}{\omega}\sqrt{\sum_{j=1}^m \frac{l_j^2-1}{3l_j}}
	= \frac{\pi}{\sqrt{3}\omega}\sqrt{n-\sum_{j=1}^m \frac{1}{l_j}}.
\end{equation}
This proves Proposition \ref{nondegenerate}.
\end{proof}

The proof shows that $U=\oplus_je^{i\pi(1-l_j)/l_j}P_{c_j}$ is among the passivizing unitary operators closest to the identity. A time-optimal Hamiltonian that implements $U$ is 
\begin{equation}
	\hspace{-2pt}H=\omega\sqrt{\frac{3}{n-\sum_{k=1}^m\frac{1}{l_k}}} 
	\bigoplus_{j=1}^m \bigg( \frac{i}{\pi}\Log P_{c_j} + \frac{\wp(l_j)}{l_j}\1_j \bigg).
\end{equation}
The operator $\1_j$ is the identity operator on $\H_j$ and $\wp$ is the parity function defined in \eqref{parityfunk}.

\section{On degenerated cases}\label{deg cases}
In this section, we describe some general properties of time-optimal transformations. We also discuss circumstances under which these properties are sufficient to determine a degenerate system's passivization time.

\subsection{Incompatibility and parallelism of time-optimal Hamiltonians}\label{traceless}
In all the cases considered so far, the specified time-optimal Hamiltonians are traceless, and the passivizing unitaries lying closest to the identity are special unitary, that is, have determinant equal to $1$. As we will see, these observations are consequences of time-optimal Hamiltonians generating shortest curves between $\1$ and the manifold of passivizing unitaries $\P(\rho_i)$.

We say that a Hamiltonian is \emph{completely incompatible} with $A$ if $\Pi H\Pi=0$ for all the eigenspace projectors $\Pi$ of $A$. Moreover, using terminology from the theory of fiber bundles \cite{KoNo1996,AnHe2014,An2018}, we say that $H$ is \emph{parallel transporting} if $\Pi H\Pi=0$ for every eigenspace projector $\Pi$ of $\rho_i$. Both of these properties separately imply that $H$ is traceless and, hence, the time-evolution operator associated with $H$ is special unitary. 

\begin{prop}\label{parallel}
Time-optimal Hamiltonians are parallel transporting and completely incompatible with $A$.
\end{prop}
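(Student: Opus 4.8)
The plan is to argue that if a time-optimal Hamiltonian $H$ were not parallel transporting, or not completely incompatible with $A$, then one could perturb the generated geodesic $U(t) = e^{-itH}$ to obtain a strictly shorter curve from $\1$ to $\P(\rho_i)$, contradicting Proposition \ref{shortest}. The natural mechanism is a first-variation/right-translation argument: by Proposition \ref{passivizing set}, $\P(\rho_i)$ is the orbit set $\{UPV : U\in\U(\H)_A,\ V\in\U(\H)_{\rho_i}\}$, so its tangent space at the endpoint $P = e^{-i\tau H}$ contains the directions generated on the right by the Lie algebra $\u(\H)_{\rho_i}$ of $\U(\H)_{\rho_i}$ and on the left by $\u(\H)_A$. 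A shortest curve meeting a submanifold must hit it orthogonally; using the bi-invariant metric $g$ and formula \eqref{logdistance}, orthogonality to the right-action directions forces $\Pi H \Pi = 0$ for every eigenprojector $\Pi$ of $\rho_i$ (parallel transporting), and orthogonality to the left-action directions forces $\Pi' H \Pi' = 0$ for every eigenprojector $\Pi'$ of $A$ (completely incompatible with $A$).

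Concretely, first I would note that $\u(\H)_{\rho_i} = \{X \in \u(\H) : [X,\rho_i]=0\} = \bigoplus_k \u(\A_k^{\rho_i})$, the block-diagonal skew-Hermitian operators relative to the eigenspace decomposition of $\rho_i$, and similarly for $\u(\H)_A$. Next, for a curve $U(t)$, $t\in[0,\tau]$, realizing the distance to $\P(\rho_i)$ and ending at $P\in\P(\rho_i)$, I would write down the first variation of the energy functional under variations fixing $U(0)=\1$ and moving the endpoint along $\P(\rho_i)$. The boundary term of the first variation is $g(\dot U(\tau), \delta U(\tau))$; since $U$ is a geodesic (Proposition \ref{time-independent}) this is the only surviving term, and it must vanish for all admissible $\delta U(\tau) \in T_P\P(\rho_i)$. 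Translating $\dot U(\tau) = -iHU(\tau)$ back to the Lie algebra by right translation and using the explicit form of $T_P\P(\rho_i)$, the vanishing condition reads $\tr\big(H\cdot U(\tau) Y U(\tau)^\dagger\big)=0$ for all $Y\in\u(\H)_A$ and $\tr(H\cdot PXP^\dagger)=0$ — wait, more carefully: for the right factor $V$ the variation is $P\,(\d V)$, giving $\tr(P^\dagger H P \cdot X)=0$ for all $X\in\u(\H)_{\rho_i}$, hence $P^\dagger H P$ has zero block-diagonal part relative to $\rho_i$'s eigenspaces; but $P$ commutes with $\rho_i$ up to the passivizing relabelling, so this is equivalent to $\Pi H\Pi=0$ for each eigenprojector $\Pi$ of $\rho_i$. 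The left factor $U$ contributes $\tr(H \cdot Y')=0$ for all $Y'\in\u(\H)_A$ after the analogous manipulation, giving $\Pi' H \Pi'=0$ for eigenprojectors of $A$. The final sentence — that tracelessness follows — is then immediate: $\tr H = \sum_k \tr(\Pi_k H \Pi_k) = 0$ either from the $A$-blocks or the $\rho_i$-blocks, since the $\Pi_k$ sum to $\1$.

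The main obstacle I expect is bookkeeping the two orthogonality conditions correctly through the left/right conjugations — in particular keeping straight that $\P(\rho_i)$ is a \emph{double} coset-type orbit, so that $T_P\P(\rho_i)$ is the (non-direct) sum of a left-translated copy of $\u(\H)_A$ and a right-translated copy of $\u(\H)_{\rho_i}$, and verifying that the first-variation boundary term genuinely decouples into these two families. A secondary subtlety is that $\P(\rho_i)$ need not be connected or embedded in a naive way, so one should either invoke the submanifold structure from Proposition \ref{passivizing set} directly, or phrase the argument purely infinitesimally via admissible endpoint variations, which sidesteps global issues. Once the orthogonality is set up, the extraction of $\Pi H\Pi = 0$ from $\tr(X \cdot \text{(block-diagonal part of }H\text{'s conjugate)})=0$ for all block-diagonal skew-Hermitian $X$ is routine, since the block-diagonal part is itself Hermitian and the trace pairing is nondegenerate on each block.
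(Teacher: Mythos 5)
Your strategy is the same as the paper's: time-optimal Hamiltonians generate shortest curves from $\1$ to $\P(\rho_i)$ (Proposition \ref{shortest}), such curves meet $\P(\rho_i)$ orthogonally, and the tangent space of $\P(\rho_i)$ at the endpoint $P$ is spanned by the left-translates $-iB_1P$ with $[B_1,A]=0$ and the right-translates $-iPB_2$ with $[B_2,\rho_i]=0$; pairing with $-iHP$ and using nondegeneracy of the trace on each block gives the two vanishing conditions. This is exactly the published argument.

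There is, however, one step whose justification as written is wrong. For the right-action directions you obtain $\tr(P^\dagger HP\,X)=0$ for all $X\in\mathfrak{u}(\H)_{\rho_i}$, i.e.\ the block-diagonal part of $P^\dagger HP$ relative to the eigenspaces of $\rho_i$ vanishes, and you then pass to $\Pi H\Pi=0$ on the grounds that ``$P$ commutes with $\rho_i$ up to the passivizing relabelling.'' That equivalence fails in general: $P$ maps eigenspaces of $\rho_i$ to eigenspaces of $\rho_p=P\rho_iP^\dagger$, which need not coincide with those of $\rho_i$ when $\rho_i$ is degenerate. (Example: a qutrit with $p_1<p_2=p_3$ and $\sigma=(1\,3)$; the $\rho_i$-blocks are $\{1\}$ and $\{2,3\}$ while the $\rho_p$-blocks are $\{1,2\}$ and $\{3\}$, so the two block-diagonal conditions on $H$ are genuinely different.) The correct way to remove the conjugation is an ingredient you already have on the table but do not deploy: by Proposition \ref{time-independent} the optimal $H$ is time-independent, so $P=e^{-i\taumin H}$ commutes with $H$ and $P^\dagger HP=H$ identically. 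Equivalently, the paper evaluates $g(-iHP,-iPB_2)$ by bi-invariance as $\tr(P^\dagger HP\,B_2)=\tr(HB_2)$ for the same reason. With that one-line repair your argument is complete; the left-action condition and the tracelessness remark are fine as stated.
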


\begin{proof}
According to Proposition \ref{shortest}, time-optimal Hamiltonians generate shortest curves between $\1$ and $\P(\rho_i)$. Such a shortest curve has to meet $\P(\rho_i)$ perpendicularly \cite{Sa1996}. Let $H$ be a time-optimal Hamiltonian and let $U$ be the passivizing unitary generated in the time $\taumin$. The velocity vector at $U$ of the time-evolution operator of $H$ is $-iHU$. Let $B_1$ and $B_2$ be any Hermitian operators commuting with $A$ and $\rho_i$, respectively. By Proposition \ref{passivizing set}, $-iB_1U$ and $-iUB_2$ are tangent vectors of $\P(\rho_i)$ at $U$. These vectors are perpendicular to $-iHU$ and, hence, 
\begin{align}
	\tr(HB_1) &= g(-iHU,-iB_1U) = 0,\\
	\tr(HB_2) &= g(-iHU,-iUB_2) = 0.
\end{align}
Since $B_1$ and $B_2$ are arbitrary, the former identity implies that $H$ is completely incompatible with $A$, and the latter implies that $H$ is parallel transporting.
\end{proof}

\subsection{Upper bounds on the passivization time from passivizing permutations}
When $A$ or $\rho_i$ has a degenerate spectrum, the expression on the right-hand side of \eqref{bound} need not be equal to the passivization time. However, the expression always is an upper bound for the passivization time. To see this, let $\sigma=c_1c_2\cdots c_m$ be a passivizing permutation. For any sequence of real numbers $\theta_1,\theta_2,\dots,\theta_m$, the operator $\oplus_j e^{i\theta_j} P_{c_j}$ is a passivizing unitary. Therefore, by \eqref{taumin},
\begin{equation}\label{eqa}
	\taumin\leq \frac{1}{\omega} \sqrt{\sum_{j=1}^m \|\Log(e^{i\theta_j} P_{c_j})\|^2}.
\end{equation}
Let $l_j$ be the length of $c_j$. The eigenvalues of $e^{i\theta_j}P_{c_j}$ are the $l_j$th roots of unity multiplied by $e^{i\theta_j}$. Consequently,
\begin{equation}
\begin{split}
	\min_{\theta_j} \|\Log(e^{i\theta_j} P_{c_j})\|^2
	&= \min_{\theta_j}\sum_{k=0}^{l_j-1} \Big(\theta_j + \frac{2\pi k}{l_j}\Big)^2\\
	&= \frac{\pi^2(l_j^2-1)}{3l_j}.
\end{split}
\end{equation}
It follows that 
\begin{equation}\label{eqb}
	\taumin 
	\leq \frac{\pi}{\sqrt{3}\omega}\sqrt{n-\sum_{k=1}^m \frac{1}{l_k}}.
\end{equation}

\begin{ex}\label{8dimexample}
Consider an $8$-dimensional system for which the spectra of $A$ and $\rho_i$ satisfy
\begin{alignat}{16}
	& a_1\ & <\ & a_2\ & <\ & a_3\ & <\ & a_4\ & <\ & a_5\ & <\ & a_6\ & =\ & a_7\ & <\ & a_8,\label{Aspec}\\
	& p_3\ & >\ & p_1\ & =\ & p_2\ & >\ & p_5\ & >\ & p_4\ & >\ & p_8\ & >\ & p_6\ & >\ & p_7.\label{Rspec}
\end{alignat}
In this case there are four passivizing permutations:
\begin{align}	
	&(1\;2\;3)(4\;5)(6\;7\;8),\label{perm1}\\
	&(1\;3)(2)(4\;5)(6\;7\;8),\label{perm2}\\
	&(1\;2\;3)(4\;5)(6)(7\;8),\label{perm3}\\
	&(1\;3)(2)(4\;5)(6)(7\;8).\label{perm4}
\end{align}
If we insert the lengths of the cycles of the first permutation into the right-hand side of \eqref{eqb}, we find that $\taumin\leq\pi\sqrt{41}/\omega\sqrt{18}$. And if we insert the lengths of the second or the third permutation's cycles, we find that $\taumin\leq\pi\sqrt{17}/3\omega$. Finally, if we insert the lengths of the fourth permutation's cycles, we find that $\taumin\leq\pi\sqrt{6}/2\omega$. 

The second and the third permutation can be obtained from the first by a division of the first and the last cycle, respectively. The division leads to a reduction of the upper bound in \eqref{eqb}. Similarly, the fourth permutation can be obtained by dividing both the first and the last cycle of the first permutation, which leads to an even greater improvement of the upper bound. (In fact, the upper bound $\pi\sqrt{6}/2\omega$ equals $\tauqsl$ and, hence, $\taumin$.) A division of the first and last cycle of the first permutation is possible because of the identities $p_1=p_2$ and $a_6=a_7$.
\end{ex}

Example \ref{8dimexample} shows that degeneracies in the spectrum of $A$ or $\rho_i$ sometimes make it possible to divide a cycle of a passivizing permutation into two cycles without changing the fact that the permutation is passivizing. Such a division always leads to a lowering of the upper bound in \eqref{eqb}.\footnote{The authors do not know whether from an arbitrary passivizing permutation one can always reach the passivization time by means of cycle division.} To see this suppose that $c=(k_1\,k_2 \dots k_{l})$ is a cycle of a passivizing permutation $\sigma$ and suppose that for some $i<j$ we have that $a_{k_i}=a_{k_j}$ or $p_{k_i}=p_{k_j}$. Then $c$ can be replaced by $(k_1 \dots k_i\, k_{j+1} \dots k_l)(k_{i+1} \dots k_j)$, see Figure \ref{cyclesplitting}, 
\begin{figure}[t]
	\centering
	\includegraphics[width=0.47\textwidth]{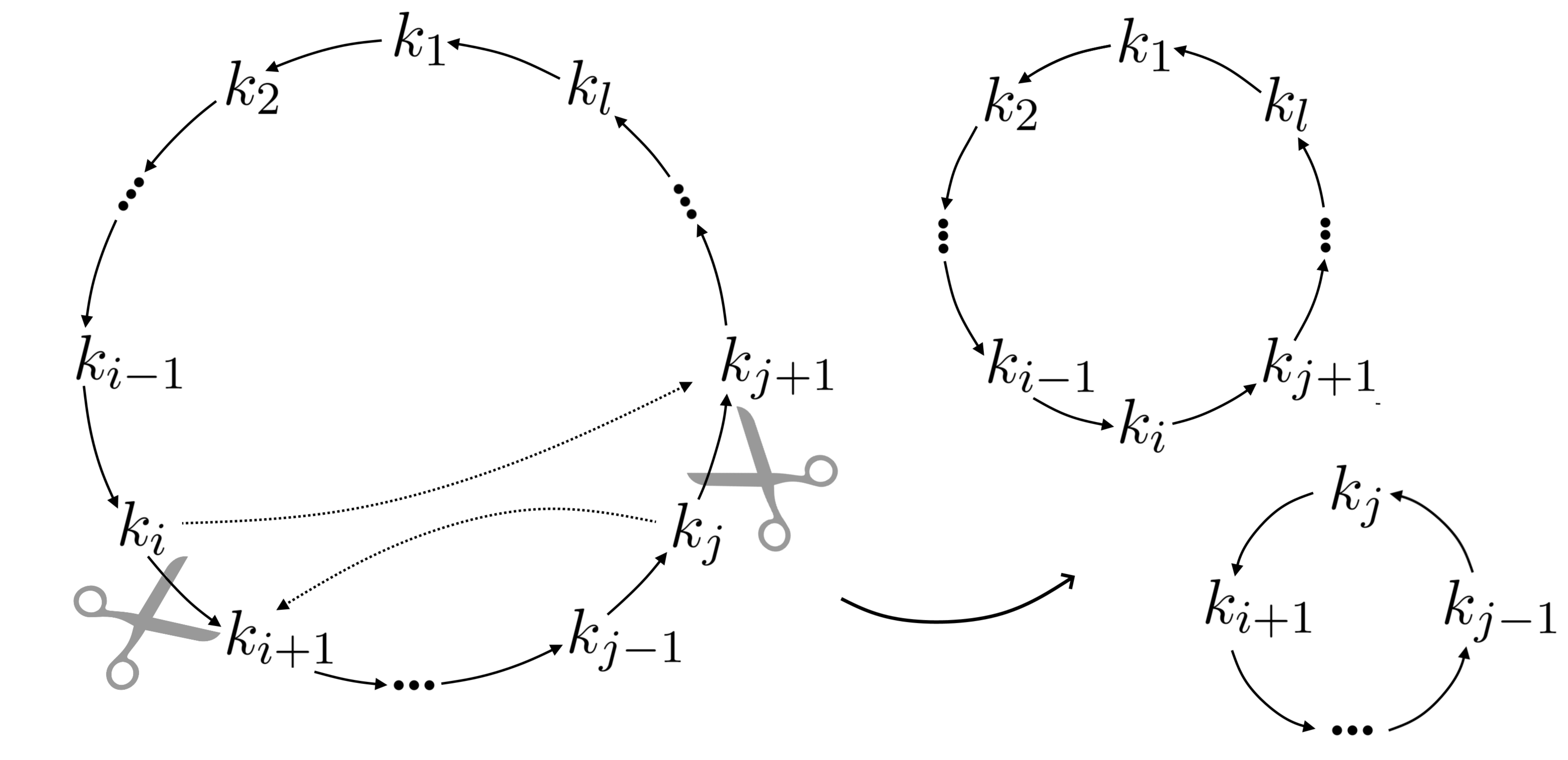}
	\caption{A cycle $(k_1\,k_2\dots k_l)$ of a passivizing permutation
	can be divided into two shorter cycles $(k_1
	\dots k_i\,k_{j+1}\dots k_l)$ and $(k_{i+1} \dots k_j)$ if $a_{k_i}=a_{k_j}$ or $p_{k_i}=p_{k_j}$.
	The resulting permutation is also passivizing.} 
	\label{cyclesplitting}
\end{figure}
and $\sigma$ be redefined accordingly. The resulting permutation is also passivizing. If the lengths of the two new cycles are $l_j'$ and $l_j''$, respectively, then 
\begin{equation}
	\frac{(l_j')^2-1}{l_j'}+\frac{(l_j'')^2-1}{l_j''}<\frac{l_j^2-1}{l_j}.
\end{equation}
The cycle division thus lowers the upper bound in \eqref{eqb}.

\subsection{Invariant subspaces of passivizing unitaries}\label{ispu}
If, as in Example \ref{8dimexample}, after repeated cycle division we end up with a passivizing involution, then we know from Proposition \ref{2cycles} that $\taumin=\tauqsl$. But if in Example \ref{8dimexample} we replace the identity $p_1=p_2$ in \eqref{Rspec} with $p_1>p_2$, then there are only two passivizing permutations, namely those in \eqref{perm1} and \eqref{perm3}, neither of which are involutions. The inequality in \eqref{eqb} guarantees that in this case, $\taumin$ is not greater than $\pi\sqrt{17}/3\omega$. However, none of the previous propositions certifies that the passivization time \emph{equals} $\pi\sqrt{17}/3\omega$. In this section and the next we will develop a method that can be used to prove that this actually is the case. The strategy is to break down the problem of determining $\taumin$ into a number of lower-dimensional problems which can be solved using results from the theory of generalized flag manifolds.

Let $\sigma$ be a passivizing permutation. Consider a decomposition of $\sigma$ into sub-permutations, $\sigma=\sigma_1\sigma_2\cdots\sigma_m$, where each sub-permutation $\sigma_j$ is a cycle or a product of cycles of $\sigma$. Define $\H_j$ as the linear span of the computational basis vectors whose labels are permuted by $\sigma_j$, and write $P_{\sigma_j}$ for the restriction of $P_\sigma$ to $\H_j$.

\begin{prop}\label{invariant}
If each eigenspace of $A$ and each eigenspace of $\rho_i$ is contained in an $\H_j$, then the $\H_j$s are invariant for the passivizing unitaries. Furthermore, the passivizing unitaries lying closest to $\1$ can be generated by time-optimal Hamiltonians that preserve the~$\H_j$s.
\end{prop}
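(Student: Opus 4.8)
The plan is to reduce everything to the description of $\P(\rho_i)$ in Proposition \ref{passivizing set} together with the block structure that the hypothesis imposes on the unitaries involved. First, note that each $\H_j$ is spanned by computational basis vectors, and that every eigenspace of $A$ and every eigenspace of $\rho_i$ is likewise spanned by a subset of the computational basis; hence both families of eigenspaces partition the basis. The assumption that every eigenspace of $A$ and every eigenspace of $\rho_i$ lies inside a single $\H_j$ then forces each $\H_j$ to be simultaneously a direct sum of $A$-eigenspaces and a direct sum of $\rho_i$-eigenspaces. Consequently every $U\in\U(\H)_A$ preserves each $\H_j$ (it preserves each $A$-eigenspace), every $V\in\U(\H)_{\rho_i}$ preserves each $\H_j$, and by construction $P_\sigma$ preserves each $\H_j$ with $P_\sigma|_{\H_j}=P_{\sigma_j}$. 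Thus every product $UP_\sigma V$ decomposes as $\bigoplus_j U_jP_{\sigma_j}V_j$ with $U_j,V_j$ unitaries on $\H_j$, and Proposition \ref{passivizing set} (with $P=P_\sigma$) yields that every passivizing unitary is block diagonal with respect to $\H=\bigoplus_j\H_j$. This proves the first assertion.

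For the second assertion I would start from a passivizing unitary $W$ closest to $\1$, so that $\|\Log W\|=\dist(\1,\P(\rho_i))=\omega\taumin$ by \eqref{taumin}. By the first part $W=\bigoplus_j W_j$, whence its principal logarithm is block diagonal, $\Log W=\bigoplus_j\Log W_j$, and the Hamiltonian $H=\tfrac{i}{\taumin}\Log W=\bigoplus_j\tfrac{i}{\taumin}\Log W_j$ preserves every $\H_j$. This $H$ is time-independent, the curve $t\mapsto e^{-itH}$ is a shortest geodesic from $\1$ to $W$, and $\tr(H^2)=\|\Log W\|^2/\taumin^2=\omega^2$, so $H$ saturates the bounded bandwidth condition and drives $\rho_i$ into the passive state $W\rho_iW^\dagger$ in time $\taumin$; hence $H$ is time-optimal. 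If $W$ happens to have $-1$ in its spectrum one simply fixes a branch of the logarithm block by block, which affects neither block diagonality nor $\|\Log W\|$. Since every closest passivizing unitary arises in this way, the second assertion follows.

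I expect the only genuine obstacle to be the bookkeeping in the first step: one must verify that the single coordinate decomposition $\H=\bigoplus_j\H_j$ simultaneously refines the eigenspace decomposition of $A$, the eigenspace decomposition of $\rho_i$, and the cycle structure of $\sigma$ — and that this is precisely what the hypothesis provides. Once this common block structure is established, the remainder is a direct application of Proposition \ref{passivizing set} and the logarithmic formula \eqref{taumin} for the passivization time, with only the minor branch-of-logarithm caveat to dispatch.
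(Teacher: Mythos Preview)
Your proposal is correct and follows essentially the same route as the paper: invoke Proposition \ref{passivizing set} together with the block structure forced by the hypothesis to get invariance of the $\H_j$s, then take the principal logarithm of a closest passivizing unitary to produce a block-diagonal time-optimal Hamiltonian. Your write-up is somewhat more explicit about why the hypothesis forces each $\H_j$ to be a direct sum of eigenspaces of both $A$ and $\rho_i$, and you add the branch-of-logarithm remark, but these are elaborations rather than a different argument.
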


\begin{proof}
The permutation operator $P_\sigma$ preserves the $\H_j$s. And so do the operators in the isotropy groups of $A$ and $\rho_i$. Proposition \ref{passivizing set} then tells us that all the passivizing unitaries preserve the $\H_j$s.

Let $U$ be any passivizing unitary operator at the minimum distance $d$ from $\1$. Since the $\H_j$s are mutually orthogonal and span $\H$, $U$ decomposes as a direct sum $U=\oplus_j U_j$ with $U_j$ acting on $\H_j$. Define $H$ as
\begin{equation}\label{defH}
	H=-\frac{i\omega}{d}\Log U=-\frac{i\omega}{d}\bigoplus_{j=1}^m \Log U_j.
\end{equation}
The Hamiltonian $H$ generates a shortest geodesic from $\1$ to $U$ and, hence, from $\1$ to $\P(\rho_i)$. Furthermore, $H$ satisfies the bounded bandwidth condition:
\begin{equation}
	\tr\big(H^2\big)=\frac{\omega^2}{d^2}\|\Log U\|^2=\omega^2.
\end{equation}
Thus, $H$ is a time-optimal Hamiltonian. By construction, $H$ preserves the spaces $\H_j$.
\end{proof}

\begin{center}
\emph{In the remainder of this section, we assume that the isotropy groups of $A$ and $\rho_i$ preserve the spaces $\H_j$.}
\end{center}

\noindent We define $\U(\H_j)_A$ as the group of unitary operators on $\H_j$ that preserve those eigenspaces of $A$ which are contained in $\H_j$. Similarly, we define $\U(\H_j)_{\rho_i}$ as the group of unitary operators on $\H_j$ that preserve those eigenspaces of $\rho_i$ which are contained in $\H_j$. Then 
\begin{equation}\label{fruit}
	\dist(\1,\P(\rho_i)) 
	= \sqrt{\sum_{j=1}^m \min_{U_j,V_j} \|\Log(U_jP_{\sigma_j} V_j)\|^2},
\end{equation}
where the minima are taken over all the $U_j$s and $V_j$s in $\U(\H_j)_A$ and $\U(\H_j)_{\rho_i}$, respectively. The summands on the right-hand side of \eqref{fruit} are generally difficult to calculate. But the problem simplifies somewhat if one of the isotropy groups contains the other. Because then we only need to minimize over the larger of the two: If $\U(\H_j)_A$ contains $\U(\H_j)_{\rho_i}$, then
\begin{equation}\label{RiA}
	\underset{U_j,V_j}{\min} \|\Log(U_jP_{\sigma_j} V_j)\| = \underset{U_j}{\min} \|\Log(P_{\sigma_j} U_j)\|,
\end{equation}
and if $\U(\H_j)_{\rho_i}$ contains $\U(\H_j)_A$, then
\begin{equation}\label{AiR}
	\underset{U_j,V_j}{\min} \|\Log(U_jP_{\sigma_j} V_j)\| = \underset{V_j}{\min} \|\Log(P_{\sigma_j} V_j)\|.
\end{equation}

The right-hand sides of \eqref{RiA} and \eqref{AiR} are geodesic distances in certain homogeneous spaces called generalized flag manifolds. Well-known examples of such are projective Hilbert spaces and Grassmann manifolds. In the next section, we describe how to calculate the distances on the right of \eqref{RiA} under certain circumstances. Notice that the isotropy group of $\rho_i$ is contained in that of $A$ if each eigenspace of $\rho_i$ is included in an eigenspace of $A$. This is the case, for example, if $\rho_i$ has a non-degenerate spectrum. The case when $\U(\H_j)_A$ is a subgroup of $\U(\H_j)_{\rho_i}$ can be treated similarly. Appendix \ref{flag} contains a brief review of generalized flag manifolds.

\subsection{Geodesic distance in generalized flag manifolds}
Let $\1_j$ be the identity operator on $\H_j$. Regard $\U(\H_j)$ as a manifold on which $\U(\H_j)_A$ acts from the right by operator pre-composition. If we equip $\U(\H_j)$ with the bi-invariant Riemannian metric that agrees with the Hilbert-Schmidt inner product on the Lie algebra of $\U(\H_j)$, then the right-hand side of \eqref{RiA} is the geodesic distance between the cosets of $\1_j$ and $P_{\sigma_j}$ in the geometry determined by the projected metric on the quotient manifold $\U(\H_j)/\U(\H_j)_A$:
\begin{equation}\label{avstand}
	\dist\big([\1_j],[P_{\sigma_j}]\big) 
	= \underset{U_j\in \U(\H_j)_A}{\min}\|\Log(P_{\sigma_j} U_j)\|.
\end{equation}
The quotient manifold $\U(\H_j)/\U(\H_j)_A$ is an example of a generalized flag manifold; see Appendix \ref{flag}. Next, we show how to calculate \eqref{avstand} under special circumstances.

\subsubsection{The Grassmann and Fubini-Study distances}
That the isotropy group of $A$ preserves $\H_j$ is equivalent to $\H_j$ being a direct sum of eigenspaces of $A$. If $\H_j$ is a sum of two eigenspaces of $A$, then $\U(\H_j)/\U(\H_j)_A$ is a Grassmann manifold, and \eqref{avstand} is the Grassmann distance between $[\1_j]$ and $[P_{\sigma_j}]$; see \cite{EdArSm1998} and Appendix~\ref{flag}. 

Suppose that $\H_j=\A_{k}\oplus\A_{k'}$. Let $n_{k}$ be the dimension of $\A_{k}$ and let $\Pi_{k}$ be the orthogonal projection of $\H_j$ onto $\A_{k}$. Furthermore, let $s_1,s_2,\dots,s_{n_{k}}$ be the singular values of $\Pi_{k}^\dagger P_{\sigma_j} \Pi_{k}$. Then, by \eqref{Gdistance} in Appendix \ref{flag}, 
\begin{equation}\label{avstandGgen}
	\dist\big([\1_j],[P_{\sigma_j}]\big) = \sqrt{ 2 \sum_{i=1}^{n_k} \big(\arccos\sqrt{s_i}\,\big)^2}.
\end{equation}
Since $P_{\sigma_j}$ is a permutation operator, each singular value is either $0$ or $1$. The number of $0$s equals the number of computational basis vectors in $\A_k$ that $P_{\sigma_j}$ maps into $\A_{k'}$. Let $\delta_j$ be twice this number. Formula \eqref{avstandGgen} yields
\begin{equation}\label{avstandG}
	\dist\big([\1_j],[P_{\sigma_j}]\big) = \frac{\pi\sqrt{\delta_j}}{2}.
\end{equation}

If one of $A$'s eigenspaces is $1$-dimensional, the Grassmann manifold $\U(\H_j)/\U(\H_j)_A$ is a projective Hilbert space. In this case, the Grassmann distance function in \eqref{avstandGgen} goes by the name ``the Fubini-Study distance.'' If $\ket{k}$ is the computational basis vector that spans the $1$-dimensional eigenspace of $A$, then
\begin{equation}\label{avstandFSgen}
	\dist\big([\1_j],[P_{\sigma_j}]\big)=\sqrt{2}\arccos{\braket{k}{\sigma(k)}}.
\end{equation}
Since $\braket{k}{\sigma(k)}=0$ or $\braket{k}{\sigma(k)}=1$, depending on whether $\sigma$ leaves $k$ invariant or not,
\begin{equation}\label{avstandFS}
	\dist\big([\1_j],[P_{\sigma_j}]\big) 
	= \begin{cases} 0 & \text{if $\sigma(k) = k$,} \\ \frac{\pi}{\sqrt{2}} & \text{if $\sigma(k)\ne k$.} \end{cases}
\end{equation}

\begin{rmk}
The reader might wonder if the assumption that the isotropy group of $A$ contains the isotropy group of $\rho_i$ is essential for the validity of \eqref{avstandG}. After all, in Example \ref{exempt}, we determined the passivization time for a system with a bivalent $A$ without making assumptions about the structure of $\U(\H)_{\rho_i}$, and the resemblance between \eqref{avstandG} and the quantum speed limit \eqref{theqsl} is striking. The problem is that if $\U(\H_j)_A$ does not contain $\U(\H_j)_{\rho_i}$, then \eqref{avstandG} does not produce a reliable contribution to the distance in \eqref{fruit}, in the sense that the right-hand side of \eqref{fruit} is independent of the passivizing permutation. Consider, for example, a qubit system prepared in the maximally mixed state and with a non-degenerated $A$. The permutations $\sigma=(1,2)$ and $\sigma'=(1)(2)$ are both passivizing. However, the distance from $[\1]$ to $[P_{\sigma}]$ in $\U(\H)/\U(\H)_A$ equals $\pi/\sqrt{2}$, while the distance from $[\1]$ to $[P_{\sigma'}]$ is $0$. Since $\rho_i$ is passive, it is the latter distance which equals $\dist(\1,\P(\rho_i))$.
\end{rmk}

\subsubsection{The flag distance}
If all the eigenspaces of $A$ in $\H_j$ are $1$-dimensional, then $\U(\H_j)/\U(\H_j)_A$ is a flag manifold \cite{Ar2003}. In this case
\begin{equation}\label{avstandF}
	\dist\big([\1_j],[P_{\sigma_j}]\big) = \sqrt{\frac{\pi^2}{3}\sum_{j=1}^m \frac{l_j^2-1}{l_j}},
\end{equation}
with $l_1,l_2,\dots,l_m$ being the lengths of the cycles of $\sigma_j$. The proof is the same as that of Proposition \ref{nondegenerate}.

\subsubsection{The generalized flag distance}
If $\H_j$ is a sum of more than two eigenspaces of $A$ and not all of these eigenspaces are $1$-dimensional, then $\U(\H_j)/\U(\H_j)_A$ is a generalized flag manifold \cite{Ar2003}. This manifold can be equipped with a metric such that for any pair of unitary operators $U$ and $V$ on $\H_j$, 
\begin{equation}\label{avstandGF}
	\hspace{-6pt}\dist\big([U],[V]\big)
	= \underset{U_j\in \U(\H_j)_A}{\min}\|\Log(U^\dagger V U_j)\|
\end{equation}
is the geodesic distance between the cosets of $U$ and $V$; see Appendix \ref{flag}. As far as the authors know, there is no closed formula for the geodesic distance \eqref{avstandGF}. However, an algorithm for numerically calculating \eqref{avstandGF} is proposed in the recent paper \cite{MaKiPe2020}.

\begin{ex}
Consider the $8$-dimensional system in Example \ref{8dimexample} but replace \eqref{Rspec} with the assumption
\begin{equation}
	p_3 > p_1 > p_2 > p_5 > p_4 > p_8 > p_6 > p_7.
\end{equation}
Then there are only two passivizing permutations:
\begin{align}	
	&(1\;2\;3)(4\;5)(6\;7\;8),\label{former}\\
	&(1\;2\;3)(4\;5)(6)(7\;8)\label{latter}.
\end{align}
Let $\sigma$ be the permutation in Equation \eqref{latter} and divide $\sigma$ into three sub-permutations: $\sigma_1=(1,2,3)$, $\sigma_2=(4,5)$, and $\sigma_3=(6)(7,8)$. The corresponding subspaces of $\H$ are the linear spans\footnote{The linear span, or ``sp'', of a set of vectors is the space consisting of all the linear combinations of the vectors in the set.}
\begin{alignat}{4}	
	&\H_1 &&=\ &&\linspan &&\{ \ket{1},\ket{2},\ket{3}\},\label{space1}\\
	&\H_2 &&=\ &&\linspan &&\{ \ket{4},\ket{5}\},\label{space2}\\
	&\H_3 &&=\ &&\linspan &&\{ \ket{6},\ket{7},\ket{8}\}.\label{space3}
\end{alignat}
The isotropy group of $A$ contains the isotropy group of $\rho_i$, and the the spaces in $\eqref{space1}-\eqref{space3}$ are direct sums of eigenspaces of $A$:
\begin{alignat}{3}	
	&\H_1 &&=\ &&\linspan\{\ket{1}\}\oplus\linspan\{\ket{2}\}\oplus\linspan\{\ket{3}\},\\
	&\H_2 &&=\ &&\linspan\{\ket{4}\}\oplus\linspan\{\ket{5}\},\\
	&\H_3 &&=\ &&\linspan\{\ket{6},\ket{7}\}\oplus\linspan\{\ket{8}\}.
\end{alignat}
Thus, we are in a situation to which Proposition \ref{invariant} applies. According to \eqref{avstandF} and \eqref{avstandFS}, $\dist([\1_1],[P_{\sigma_1}])=  \pi\sqrt{8}/3$ and $\dist([\1_2],[P_{\sigma_2}])=\dist([\1_3],[P_{\sigma_3}])=\pi/\sqrt{2}$. By Equations \eqref{fruit} and \eqref{RiA},
\begin{equation}
	\dist(\1,\P(\rho_i)) 
	= \sqrt{\frac{8\pi^2}{9} + \frac{\pi^2}{2} + \frac{\pi^2}{2}}=\frac{\pi\sqrt{17}}{3}.
\end{equation}
This confirms the claim regarding the passivization time in the first paragraph of Section \ref{ispu}.
\end{ex}

\begin{ex}
Consider a $14$-dimensional system for which the spectrum of $A$ is such that
\begin{equation}
	\hspace{-2pt}a_1\hspace{-1pt}<\hspace{-1pt}a_2\hspace{-1pt}=\hspace{-1pt}\cdots\hspace{-1pt}=\hspace{-1pt}a_5\hspace{-1pt}<\hspace{-1pt}a_6\hspace{-1pt}=\hspace{-1pt}a_7\hspace{-1pt}<\hspace{-1pt}\cdots\hspace{-1pt}<\hspace{-1pt}a_{12}\hspace{-1pt}=\hspace{-1pt}\cdots\hspace{-1pt}=\hspace{-1pt}a_{14}.
\end{equation}
Suppose that the isotropy group of $A$ contains the isotropy group of $\rho_i$, and suppose that $\sigma=\sigma_1\sigma_2\sigma_3$,~with
\begin{align}
	&\sigma_1=(6\;7\;10),\\
	&\sigma_2=(2\;12\;3\;14)(4\;13)(5),\\
	&\sigma_3=(1\;8\;9\;11),
\end{align}
is a passivizing permutation. The spaces
\begin{alignat}{4}	
	&\H_1 &&=\ &&\linspan &&\{ \ket{6},\ket{7},\ket{10}\},\\
	&\H_2 &&=\ &&\linspan &&\{ \ket{2},\ket{3},\ket{4},\ket{5},\ket{12},\ket{13},\ket{14}\},\\
	&\H_3 &&=\ &&\linspan &&\{ \ket{1},\ket{8},\ket{9},\ket{11}\},
\end{alignat}
are invariant for all the passivizing unitaries, and they can be decomposed into eigenspaces of $A$ as follows
\begin{alignat}{3}	
	&\H_1 &&=\ &&\linspan\{\ket{6},\ket{7}\}\oplus\linspan\{\ket{10}\},\label{dec1}\\
	&\H_2 &&=\ &&\linspan\{\ket{2},\ket{3},\ket{4},\ket{5}\}\oplus\linspan\{\ket{12},\ket{13},\ket{14}\},\label{dec2}\\
	&\H_2 &&=\ &&\linspan\{\ket{1}\}\oplus\linspan\{\ket{8}\}\oplus\linspan\{\ket{9}\}\oplus\linspan\{\ket{11}\}.\label{dec3}
\end{alignat}
The decompositions tell us that $\U(\H_1)/\U(\H_1)_A$ is a projective Hilbert space, that $\U(\H_2)/\U(\H_2)_A$ is a Grassmann manifold, and that $\U(\H_3)/\U(\H_3)_A$ is a flag manifold. By \eqref{avstandFS}, \eqref{avstandG}, and \eqref{avstandF},
\begin{align}
	&\dist\big([\1_1],[P_{\sigma_1}]\big) = \frac{\pi}{\sqrt{2}},\\
	&\dist\big([\1_2],[P_{\sigma_2}]\big) = \frac{\pi\sqrt{3}}{\sqrt{2}},\\
	&\dist\big([\1_3],[P_{\sigma_3}]\big) = \frac{\pi\sqrt{5}}{2}.
\end{align}
Hence, by \eqref{fruit}, 
\begin{equation}\label{sss}
	\dist\big(\1,\P(\rho_i)\big) 
	= \sqrt{ \frac{\pi^2}{2} + \frac{3\pi^2}{2} + \frac{5\pi^2}{4}}
	= \frac{\pi\sqrt{13}}{2}.
\end{equation}
\end{ex} 

Above, we have treated the case when each eigenspace of $\rho_i$ in $\H_j$ is a subspace of an eigenspace of $A$. As mentioned at the end of Section \ref{ispu}, the case when each eigenspace of $A$ in $\H_j$ is contained in an eigenspace of $\rho_i$ can be treated in the same way. Also, the hybrid case when each eigenspace of $\rho_i$ is a subspace of an eigenspace of $A$ or is a sum of eigenspaces of $A$ can be handled similarly. Then the generalized flag manifold is given by the quotient of $\U(\H_j)$ and the group generated by the union of $\U(\H_j)_A$ and $\U(\H_j)_{\rho_i}$.

\section{On the power of energy extraction from quantum batteries}\label{charging}
In this final section, we use results from previous sections to derive bounds on the power with which energy can be reversibly extracted from a quantum battery. We follow \cite{AlBaNi2004,AlFa2013,BiViMoGo2015,CaPoBiCeGoViMo2017} and define a quantum battery as a closed $n$-dimensional quantum system that can release energy through a controllable process, causing the battery state to change according to a von Neumann equation of the form $\dot\rho(t)=-i[H + V(t),\rho(t)]$. Here, $H$ is the battery's internal Hamiltonian, and $V(t)$ is a time-dependent potential. We limit our considerations to cyclic processes and, thus, assume that the potential vanishes outside a finite time interval $[0,\tau]$, the final time $\tau$ being the \emph{duration} of the process. Also, to connect with previous sections, we assume that the initial state of the battery is incoherent relative to $H$ and that the available resources are limited in such a way that the bandwidth of the potential cannot exceed a given value:
\begin{equation}\label{potentialbbc}
	\tr\big(V(t)^2\big)\leq \omega^2.
\end{equation}
The internal Hamiltonian $H$ here plays the role of the observable $A$, and minimality of $\EE_H$ characterizes the passivity of states. We prefer to denote the eigenvalues of $H$ by $\epsilon_k$ rather than $a_k$, but apart from that we follow the standard set in Section \ref{puig}.

\subsection{Ergotropy and the power of complete discharge processes}
The maximal amount of (average) energy that can be cyclically extracted from a battery state $\rho_i$ is called the ergotropy of the battery \cite{AlBaNi2004}. The ergotropy equals the difference in energy of $\rho_i$ and that of a passive state:
\begin{equation}
	W(\rho_i)=\EE_H(\rho_i)-\EE_H(\rho_p).
\end{equation}
Since all the passive states have the same energy content, the ergotropy only depends on the initial state and the internal Hamiltonian. In terms of the internal energies and the eigenvalues of $\rho_i$, the ergotropy reads 
\begin{equation}
	W(\rho_i)=\sum_{k=1}^n \epsilon_k\big(p_k-p_{\sigma(k)}\big).
\end{equation}
Here $\sigma$ is any passivizing permutation.

No energy can be extracted through a cyclic unitary process from a battery in a passive state \cite{AlBaNi2004}. Therefore we call an energy extraction process that leaves the battery in a passive state \emph{a complete discharging} of the battery. The (average) power of a complete discharging of duration $\tau$ is $W(\rho_i)/\tau$. We define $\taumin$ as the passivization time of the battery determined by the bandwidth condition \eqref{bbcondition} with the same right-hand side as in \eqref{potentialbbc}. Note that this definition of the passivization time does not take any characteristics of the internal Hamiltonian $H$ into account, even though $H$ affects the battery's dynamics. If the bandwidth of the internal Hamiltonian greatly exceeds $\omega^2$, then no Hamiltonian of the form $H + V(t)$ is even close to being time-optimal. Nevertheless, according to the next proposition, the duration of a complete discharging is at least $\taumin$. Thus, the power of such a process is bounded from above by $W(\rho_i)/\taumin$.

\begin{prop}\label{upperpower}
The duration of a complete discharge process is greater than the passivization time.
\end{prop}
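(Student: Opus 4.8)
The plan is to strip the internal Hamiltonian $H$ out of the dynamics by passing to the frame co‑rotating with $H$, and then to invoke the definition of $\taumin$ directly. Concretely, let $V(t)$ be the potential of a complete discharge process of duration $\tau$, let $U(t)$ be the associated propagator — so $\dot U(t)=-i\big(H+V(t)\big)U(t)$ with $U(0)=\1$ — and note that $\rho(\tau)=U(\tau)\rho_iU(\tau)^\dagger$ is passive, i.e. $U(\tau)\in\P(\rho_i)$. First I would introduce the rotated propagator $W(t)=e^{iHt}U(t)$, which satisfies $W(0)=\1$.

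Next I would compute $\dot W(t)$. Since $e^{iHt}$ commutes with $H$, the two terms involving $H$ cancel and one is left with
\begin{equation}
	\dot W(t)=-i\,\tilde V(t)\,W(t),\qquad \tilde V(t)=e^{iHt}V(t)e^{-iHt}.
\end{equation}
The operator $\tilde V(t)$ is Hermitian, and because conjugation by a unitary preserves the trace, $\tr\big(\tilde V(t)^2\big)=\tr\big(V(t)^2\big)\leq\omega^2$. Thus $\tilde V$ is a time‑dependent Hamiltonian meeting the bounded bandwidth condition \eqref{bbcondition}, and $W(t)$ is its propagator.

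It then remains to verify that $W$ drives $\rho_i$ into a passive state in the time $\tau$. One has $W(\tau)=e^{iH\tau}U(\tau)$; the unitary $e^{iH\tau}$ commutes with $H$, so it lies in the isotropy group of the observable (here $H$ plays the role of $A$), and since $U(\tau)\in\P(\rho_i)$, Proposition \ref{passivizing set} gives $W(\tau)=e^{iH\tau}U(\tau)\in\P(\rho_i)$. Hence $W$ is the propagator of a bandwidth‑bounded Hamiltonian that passivizes $\rho_i$ in the time $\tau$, and the definition of $\taumin$ yields $\tau\geq\taumin$. The argument has no genuinely hard step; the only point that needs care is that eliminating $H$ in the rotating frame costs nothing — neither in bandwidth, which is exactly why conjugating by the isometry $e^{iHt}$ is the right move, nor in the final‑time requirement, which is why one needs $e^{iH\tau}$ to belong to the isotropy group of $H$ so that it maps the passive state $\rho(\tau)$ to another passive state. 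Everything else is the standard interaction‑picture computation.
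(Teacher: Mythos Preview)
Your proof is correct and is essentially the paper's own argument: both pass to the interaction picture, observe that the transformed potential $\tilde V(t)=e^{iHt}V(t)e^{-iHt}$ still satisfies the bandwidth bound, and use that $e^{iH\tau}$ preserves passivity (you via Proposition~\ref{passivizing set}, the paper by noting that passive states commute with $H$) to conclude $\tau\geq\taumin$. The only cosmetic difference is that you phrase things in terms of the propagator $W(t)=e^{iHt}U(t)$ while the paper works directly with the state $\rho_I(t)$.
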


\begin{proof}
Let $V(t)$ be a potential that satisfies \eqref{potentialbbc} and completely discharges the battery in the time $\tau$. We regard the potential as a perturbation and go over to the interaction picture. In the interaction picture, the state of the battery evolves according to $\dot\rho_I(t)=-i[V_I(t),\rho_I(t)]$ with $V_I(t)=e^{itH}V(t) e^{-itH}$. The bandwidth of $V_I(t)$ is upper bounded by $\omega^2$ since $\tr(V_I(t)^2)=\tr(V(t)^2)$, and the final state is passive since the passive states commute with $H$. From this follows that $\tau\geq\taumin$.
\end{proof}

The next two examples are direct consequences of Proposition \ref{upperpower}, Example \ref{active}, and Proposition \ref{nondegenerate}.

\begin{ex}
Consider a fully charged battery. That is, consider a battery in a maximally active state $\rho_i$. Let $P$ be the power of a complete discharge process with a potential whose bandwidth is bounded by $\omega^2$. Then
\begin{equation}
	P\leq \frac{\omega}{\pi} \sqrt{\frac{2}{m}} \sum_{k=1}^n \epsilon_k \big(p_k-p_{n-k+1}\big),
\end{equation}
where $m$ is the greatest integer for which $p_m<p_{n-m+1}$ and $\epsilon_m<\epsilon_{n-m+1}$ hold.
\end{ex}

\begin{ex}\label{tretton}
Suppose that the internal Hamiltonian and the initial state of a battery are non-degenerate. Let $P$ be the power of a complete discharging of the battery with a potential whose bandwidth is bounded by $\omega^2$. Then
\begin{equation}
	P\leq  \frac{\omega}{\pi} \sqrt{\frac{3}{n-\sum_{k=1}^m\frac{1}{l_k}}}  W(\rho_i).
\end{equation}
Here $l_1,l_2,\dots,l_m$ are the lengths of the cycles of the unique permutation $\sigma$ that passivizes the initial state.
\end{ex}

Since we require that the discharge processes are cyclic, their duration can never be as small as the passivization time. However, there are discharge processes whose duration comes arbitrarily close to the passivization time.

\begin{prop}\label{tight_bound}
The passivization time is a tight bound on the duration of complete discharge processes. 
\end{prop}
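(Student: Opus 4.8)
The plan is to upgrade the interaction-picture reduction from the proof of Proposition~\ref{upperpower} into a construction: I will exhibit, for every $\delta>0$, a continuous potential obeying \eqref{potentialbbc} that completely discharges the battery in time $\taumin+\delta$, by softly switching on and off the interaction-picture image of a time-optimal passivizing Hamiltonian. Together with the lower bound $\tau\geq\taumin$ supplied by Proposition~\ref{upperpower}, this shows that $\taumin$ is the infimum of the durations of complete discharge processes, that is, that it is a tight bound.

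First I would fix a time-optimal passivizing Hamiltonian. Assuming $\taumin>0$ (otherwise $\rho_i$ is already passive and there is nothing to prove), the minimum in \eqref{taumin} is attained, by compactness of $\P(\rho_i)$, at some $P\in\P(\rho_i)$; put $H_0=-\tfrac{i}{\taumin}\Log P$, so that $\tr(H_0^2)=\omega^2$ and $e^{-i\taumin H_0}=P$. Then, for each $\delta\in(0,\taumin)$, I would choose a smooth profile $f_\delta\colon[0,\taumin+\delta]\to[0,1]$ with $f_\delta(0)=f_\delta(\taumin+\delta)=0$ and $\int_0^{\taumin+\delta} f_\delta(t)\,\dt=\taumin$; such a function exists (take a trapezoid with ramp width tuned so that the integral is exactly $\taumin$, then mollify and re-tune), the point being that the target value $\taumin$ lies strictly below the supremum $\taumin+\delta$ of integrals achievable under these constraints. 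Now set $V_I(t)=f_\delta(t)H_0$ on $[0,\taumin+\delta]$, and define the Schr\"odinger-picture potential $V(t)=e^{-itH}V_I(t)e^{itH}$ there and $V(t)=0$ otherwise. Because $f_\delta$ vanishes at both endpoints, $V$ is continuous with support in $[0,\taumin+\delta]$, and $\tr(V(t)^2)=f_\delta(t)^2\,\tr(H_0^2)\leq\omega^2$, so \eqref{potentialbbc} holds.

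Next I would verify the discharge. Passing to the interaction picture exactly as in Proposition~\ref{upperpower}, the dynamics becomes $\dot\rho_I(t)=-i[V_I(t),\rho_I(t)]$ with $\rho_I(0)=\rho_i$ (here we use that $\rho_i$ commutes with $H$). Since the operators $V_I(t)$ are all multiples of $H_0$, they commute pairwise, so the interaction-picture propagator at $T=\taumin+\delta$ is $\exp\!\big(-iH_0\!\int_0^{T}\! f_\delta(t)\,\dt\big)=e^{-i\taumin H_0}=P\in\P(\rho_i)$; hence $\rho_I(T)=P\rho_iP^\dagger$ is passive. Finally $\rho(T)=e^{-iTH}\rho_I(T)e^{iTH}$ is the conjugate of a passive state by a unitary that commutes with $H$, hence is itself passive. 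So $V$ completely discharges the battery in time $\taumin+\delta$, and letting $\delta\downarrow0$, combined with Proposition~\ref{upperpower}, finishes the proof.

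I do not expect a real obstacle; the only slightly delicate point is manufacturing the profile $f_\delta$ so that its integral is \emph{exactly} $\taumin$, which is what makes the interaction-picture propagator land precisely on $P$ rather than merely near it. One could instead settle for a profile whose integral is close to $\taumin$ and then invoke continuity of the passivization condition together with continuity of $U\mapsto\|\Log U\|$ to absorb the error, but pinning the integral down is cleaner and sidesteps that argument.
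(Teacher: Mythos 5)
Your construction is essentially the paper's own proof: the paper likewise conjugates a time-optimal Hamiltonian into the interaction picture, modulates it by a smooth ramp $u'(t)\in[0,1]$ whose integral over $[0,\taumin+\epsilon]$ is exactly $\taumin$, and checks the bandwidth bound and the passivity of the final state in the same way. Your version is correct; it merely spells out a few details (attainment of the minimum in \eqref{taumin}, the commuting one-parameter family, the final conjugation by $e^{-iTH}$) that the paper leaves implicit.
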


\begin{proof}
Let $\epsilon>0$ be arbitrary and let $u(t)$ be a smooth function which equals $0$ for $t\leq 0$ and $\taumin$ for $t\geq \taumin+\epsilon$ and whose derivative $u'(t)$ takes values between $0$ and $1$ for all $t$. Let $V_I$ be a time-optimal Hamiltonian and define a potential as $V(t)=u'(t) e^{-itH}V_I e^{itH}$. The potential vanishes outside $[0,\taumin+\epsilon]$ and satisfies the bounded bandwidth condition \eqref{potentialbbc}: $\tr(V(t)^2)=u'(t)^2\omega^2\leq \omega^2$. Furthermore, the solution $\rho(t)$ to the von Neumann equation with Hamiltonian $H+V(t)$ that extends from $\rho_i$ is passive at $t=\taumin+\epsilon$. Explicitly, $\rho(\taumin+\epsilon)=e^{-i\taumin V_I}\rho_i e^{i\taumin V_I}$, which is passive by assumption.
\end{proof}

\subsection{Assisted discharging}
In an assisted discharging, a catalyst is used in the discharge process. The catalyst is ultimately unchanged but may interact and become correlated with the battery during the discharge process. According to Proposition \ref{assistedlowerbound}, the power $P$ with which a fully charged battery can be discharged with the help of an $n_c$-dimensional catalyst is bounded from above as
\begin{equation}
	P \leq \frac{2\omega}{\pi}\sqrt{\frac{n_c}{\delta}}
	W(\rho_i).
\end{equation} 
Here $\delta$ is the discrepancy of the fully charged battery.

\subsection{Collective discharging.}\label{multidischarging}
We define a \emph{battery ensemble} to be a system built up of several identical batteries. In a collective complete discharge process, the states of all the batteries are collectively transformed into uncorrelated passive states.  

Let $P_\sharp$ be the supremum of all powers of collective complete discharge processes directed by global potentials whose bandwidth is bounded from above by $\omega^2Nn^{N-1}$, and let $P_{\|}$ be the supremum of all powers of parallel complete discharge processes directed by local potentials fulfilling \eqref{potentialbbc}. The fraction $P_\sharp/P_{\|}$ is a measure of the advantage of using a collective complete discharge process over a parallel complete discharge process; see \cite{CaPoBiCeGoViMo2017}. According to Proposition \ref{upperpower} and \eqref{Nqsl}, $P_\sharp\leq W(\rho_i^{\otimes N})/\tauNqsl$, and by Proposition \ref{tight_bound}, $P_{\|}=W(\rho_i^{\otimes N})/\taumin$. The advantage is thus upper bounded by the fraction of the single battery passivization time and the collective quantum speed limit:
\begin{equation}
    \frac{P_\sharp}{P_{\|}}
    \leq \frac{\taumin}{\tauNqsl}
    =\frac{2\omega\taumin}{\pi}\sqrt{\frac{Nn^{N-1}}{\delta_N}}.
\end{equation}

\begin{ex}
If the batteries of the ensemble are prepared in mixed and maximally active qubit states, then, by Example \ref{collectivequbit}, the power $P$ with which energy can be extracted in a complete discharge process governed by a potential satisfying \eqref{collectivebbcondition} is bounded as
\begin{equation}
    P\leq \frac{\sqrt{2}\omega(\epsilon_2-\epsilon_1)(p_2-p_1)}{\pi}
    \sqrt{\frac{2^N - \wp(N)\binom{N}{N/2}}{N2^N}}.
\end{equation}
Furthermore, the advantage of a collective discharging is
\begin{equation}
    \frac{P_\sharp}{P_{\|}}=\sqrt{\frac{N2^N}{2^N-\wp(N)\binom{N}{N/2}}}.
\end{equation}
If the batteries are identically prepared maximally active full rank qutrits, then, by Example \ref{collectivequtrit}, the power with which energy can be extracted in a complete discharge process is upper bounded according to
\begin{equation}
    P\leq \frac{2\omega(\epsilon_3-\epsilon_1)(p_3-p_1)}{\pi\sqrt{3}}
    \sqrt{\frac{N3^N}{3^N-\sum_{k=0}^{\lfloor N/2\rfloor}\binom{N}{k,k}}}.
\end{equation}
In this case, the advantage of a collective discharging is
\begin{equation}
    \frac{P_\sharp}{P_{\|}}= \sqrt{\frac{2N3^{N-1}}{3^N-\sum_{k=0}^{\lfloor N/2\rfloor} \binom{N}{k,k}}}.
\end{equation}
\end{ex}

We saw in Example \ref{neithernor} that the product of two passive qutrit states need not be globally passive. In the next example, we compare the power of an optimal collective passivization process with the power of an optimal global passivization process for such a qutrit battery.

\begin{ex}
Suppose that $H$ is non-degenerate with energies satisfying
\begin{equation}\label{spec1}
	2\epsilon_1<\epsilon_1+\epsilon_2<2\epsilon_2<\epsilon_1+\epsilon_3<\epsilon_2+\epsilon_3<2\epsilon_3,
\end{equation}
and suppose that $\rho_i$ is a maximally active qutrit state with a spectrum that satisfies 
\begin{equation}\label{spec2}
	p_1^2<p_1p_2<p_2^2<p_1p_3<p_2p_3<p_3^2.
\end{equation}
Each battery has a unique passive state $\rho_p$. The product $\rho_p^{\otimes 2}$, however, is not a passive state; see Example \ref{neithernor}. The energy one extracts from $\rho_i^{\otimes 2}$ in a collective passivization process is less than the energy one extracts in a global passivization process. To be precise, the difference in energy content of $\rho_p^{\otimes 2}$ and that of a globally passive state is $(\epsilon_1+\epsilon_3-2\epsilon_2)(p_1p_3-p_2^2)$. There are collective passivization processes with a duration arbitrarily close to $\pi/2\omega$ and global passivization processes with a duration arbitrarily close to (but not less than) $\pi/\omega\sqrt{3}$. Thus, the difference in `optimal power' of a collective passivization process and a global passivization process is
\begin{equation}\label{diff}
\begin{split}
	\frac{2\omega}{\pi}(2-\sqrt{3}&)(\epsilon_3-\epsilon_1)(p_3-p_1)\\
		&-\frac{\omega\sqrt{3}}{\pi}(\epsilon_1+\epsilon_3-2\epsilon_2)(p_1p_3-p_2^2).
\end{split}
\end{equation}
Due to \eqref{spec1} and \eqref{spec2}, the expression in \eqref{diff} is positive. Hence, an optimal collective passivization process is more `powerful' than a global passivization process.
\end{ex}

\subsection{Energy and power fluctuations of discharge processes}
We finish this paper with some observations concerning energy fluctuations of discharge processes. We consider a battery prepared in a state of definite energy through a stochastic preparation procedure. Such a statistical state can be modeled by a density operator that is incoherent relative to the internal Hamiltonian. The average energy extracted from such a battery in a complete discharge process, which leaves the battery in a passive statistical state, is equal to the ergotropy. 

To derive an expression for the variation in transferred energy in a complete discharge process, let $\rho_i$ be the prepared statistical state, let $H$ be the internal Hamiltonian, and let $E_1,E_2,\dots, E_r$ be the different, possibly degenerate, eigenvalues of $H$. Also, let $\Pi_k$ be the orthogonal projection onto the eigenspace corresponding to $E_k$. The variation in transferred energy is
\begin{equation}\label{variation}
	\Delta^2 W(\rho_i) = \sum_{k,l=1}^r (E_k-E_l)^2 p(l,k)- W(\rho_i)^2,
\end{equation}
where $p(l,k)$ is the probability that a battery starts in a state with energy $E_k$ and ends up in a state with energy $E_l$ in a complete discharge process. If $U$ is the unitary implemented by the discharge process, then 
\begin{equation}
	p(l,k) = \tr(\Pi_l U \Pi_k \rho_i \Pi_k U^\dagger).
\end{equation} 
If we insert this expression into \eqref{variation} and simplify, we obtain the following expression for the variation
\begin{equation}\label{variation2}
\begin{split}
 \Delta^2W&(\rho_i) 
  = \Delta^2H(\rho_i) + \Delta^2H(\rho_p)\\ 
    &  + 2\,\EE_H(\rho_i)\EE_H(\rho_p) 
    - 2\tr\big(U^\dagger HUH\rho_i\big).
\end{split}
\end{equation}
Here $\Delta^2H(\rho)=\EE_{H^2}(\rho)-\EE_H^2(\rho)$, and $\rho_p=U\rho_i U^\dagger$ is the final statistical state of the battery. Since $U$ does not commute with $H$, unless $\rho_i$ is passive, the last term is process-dependent (while the other terms are not). Thus, the variation in the amount of energy extracted from a battery may differ for different complete discharge processes. We nuance this statement with a proposition and an example:

\begin{prop}\label{fluctuation}
If the unitaries in the isotropy group of the initial state commute with $H$, then all complete discharge processes have the same variation in the transferred energy.  
\end{prop}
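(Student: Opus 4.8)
The plan is to argue directly from Equation \eqref{variation2}, showing that every term on its right-hand side except the last is the same for all complete discharge processes, and that the hypothesis forces the remaining term $\tr(U^\dagger HUH\rho_i)$ to be process-independent as well.

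First I would dispose of the terms that do not involve $U$ explicitly. The quantities $\Delta^2H(\rho_i)$ and $\EE_H(\rho_i)$ depend only on $\rho_i$ and $H$, and $\EE_H(\rho_p)$ is the common (minimal) energy shared by all passive states. For $\Delta^2H(\rho_p)=\EE_{H^2}(\rho_p)-\EE_H(\rho_p)^2$ it suffices to recall from Section \ref{extremals} that any two passive states are related by a unitary $W$ commuting with $H$; since such a $W$ then commutes with $H^2$ as well, $\EE_{H^2}$ and $\EE_H$ take the same values on every passive state. Hence only $\tr(U^\dagger HUH\rho_i)$ can depend on the process.

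Next I would invoke Proposition \ref{passivizing set} with $A$ replaced by $H$: fixing one passivizing unitary $P$, every unitary implemented by a complete discharge process is of the form $U=U_HPV$ with $U_H\in\U(\H)_H$ and $V\in\U(\H)_{\rho_i}$. Because $U_H^\dagger HU_H=H$, this gives $U^\dagger HU=V^\dagger P^\dagger HPV$, so
\begin{equation}
	\tr\big(U^\dagger HUH\rho_i\big)=\tr\big(V^\dagger P^\dagger HPVH\rho_i\big).
\end{equation}
Now the hypothesis enters: $V$ commutes with $H$ by assumption, and, being in the isotropy group of $\rho_i$, it also commutes with $\rho_i$; therefore $VH\rho_i=H\rho_i V$, and cyclicity of the trace yields $\tr(V^\dagger P^\dagger HPVH\rho_i)=\tr(P^\dagger HPH\rho_i)$, which is manifestly independent of $U$. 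Together with the first step, this shows that $\Delta^2W(\rho_i)$ takes the same value for every complete discharge process.

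There is no genuinely difficult step here; the proof is a short computation. The only points requiring care are (i) noticing that when $H$ or $\rho_i$ is degenerate there may be several passive states, so one must verify that the $\rho_p$-dependent terms are constant over the whole set of passive states rather than for a single representative, and (ii) keeping straight which isotropy group — that of $H$, acting by right conjugation, or that of $\rho_i$, acting by left conjugation — eliminates which factor in the decomposition $U=U_HPV$.
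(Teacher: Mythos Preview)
Your proposal is correct and follows essentially the same route as the paper: decompose an arbitrary passivizing unitary via Proposition \ref{passivizing set}, use $U_H\in\U(\H)_H$ to remove the left factor from $U^\dagger HU$, and then use the hypothesis together with $V\rho_i=\rho_i V$ and cyclicity of the trace to reduce $\tr(U^\dagger HUH\rho_i)$ to a fixed value. The paper's proof takes $P_\sigma$ as the reference passivizing unitary and is terser about the $\rho_p$-dependent terms (it has already asserted their process-independence in the paragraph preceding the proposition), but the substance is the same.
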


\begin{proof}
Let $P$ be any passivizing unitary and let $\sigma$ be any passivizing permutation. By Proposition \eqref{passivizing set}, $P$ can be decomposed as $P=UP_\sigma V$, where $U$ commutes with $H$ and $V$ commutes with $\rho_i$. By assumption, $V$ also commutes with $H$. A straightforward calculation yields
\begin{equation}
	\tr\left(P^\dagger HPH\rho_i\right)=\tr\left(P_\sigma^\dagger HP_\sigma H\rho_i\right).
\end{equation}
This shows that the value of the final term in \eqref{variation2} is independent of $P$ and, hence, that the variation in transferred energy is process-independent.
\end{proof}

According to Proposition \ref{fluctuation}, all complete discharge processes of a battery in a non-degenerate statistical state have the same variation in the energy transfer. The following example shows that this need not be the case if the state has a degenerate spectrum.

\begin{ex}\label{qutritvar}
Consider a qutrit battery whose internal Hamiltonian is non-degenerate with spectrum $\epsilon_1=0<\epsilon_2<\epsilon_3$, and whose initial statistical state is degenerate with spectrum $p_1=p_2<p_3$. Relative to the computational basis, a general passivizing unitary has the matrix 
\begin{equation}
    U=\begin{bmatrix} 
        0                   & 0                                 & e^{i\alpha} \\ 
        \sqrt{a}e^{i\beta}  & \sqrt{1-a}e^{i(\beta+\theta)}     & 0 \\ 
        \sqrt{1-a}e^{i\gamma}  & -\sqrt{a}e^{i(\gamma+\theta)}  & 0 
        \end{bmatrix},
\end{equation}
where $\alpha,\beta,\gamma$, and $\theta$ are arbitrary and $0\leq a \leq 1$. The variation in transferred energy is independent of the~phases:
\begin{equation}
    \Delta^2W(\rho_i) = p_1\epsilon_3^2(9p_1+5)-2p_1a(\epsilon_3-\epsilon_2).
\end{equation}
Also, $\Delta^2W(\rho_i)$ is maximal for $a=0$ and minimal for $a=1$. We find time-optimal processes among those that implement a $U$ for which $a=0$. But none of the processes that implement a $U$ for which $a=1$ is time-optimal.
\end{ex}

Example \ref{qutritvar} shows that there may be a trade-off between being time-optimal and having small fluctuations in transferred energy for complete discharge processes. Further investigation is required. 

\section{Summary and outlook}
In this paper, we have considered, and to some extent answered, the following question: Suppose that a finite-dimensional quantum system is prepared in a state that is incoherent relative to an observable. In how short a time can the state be transformed into a passive state for the observable provided that the Hamiltonian responsible for the transformation has bounded bandwidth?

We began by deriving a general QSL for the transformation time, which we also expanded to a lower time-bound for collective passivization processes. We then showed that for some systems, such as systems prepared in maximally active states, the QSL is equal to the passivization time, that is, the least possible time in which the system can be transformed into a passive state under the specified conditions. But we also showed that for some systems, the QSL is not tight. We calculated the passivization time explicitly for systems such that the observable and the initial state are non-degenerate. Then we developed a method to determine the passivization time for degenerate systems. The method presupposes that the eigenspaces of the observable and the state are in a particular relative constellation, which means that the method does not apply to all conceivable systems that match the description above - other approaches are required.

The problem discussed in this paper is an example of a brachistochrone problem \cite{CaHoKoOk2006,CaHoKoOk2007,CaHoKoOk2008,RuSt2015,WaAlJaLlLuMo2015,WaKo2020}. For constrained closed quantum systems, such problems are typically reformulated as one or more time-local relations for the  Hamiltonian using the calculus of variations \cite{CaHoKoOk2006,CaHoKoOk2007,WaAlJaLlLuMo2015,WaKo2020}. However, such relations are generally of little help in determining the shortest possible transformation time. For example, in the case dealt with here, calculus of variations would generate the results in Propositions \ref{time-independent} and \ref{parallel}. From there, there is still a long way to go. 

In the last section, we applied the results from previous sections to quantum batteries. Specifically, we derived upper bounds on the power with which energy can be extracted from a quantum battery through a cyclic unitary process. Here we only considered complete discharge processes, that is, processes that leave the battery in a passive state relative to the battery's internal Hamiltonian. From such a state, no more energy can be extracted through a cyclic unitary process.

Recently, the interest in quantum batteries has grown considerably, not least because of their predicted practical significance \cite{BiCoGoAnAd2019}. Here we have assumed that the battery is initially in an incoherent state relative to the internal Hamiltonian. The next step is to extend the results to quantum batteries in coherent states. The recent paper \cite{FrBiGuMiGoPl2020} is possibly the first step in such a direction.

\section{Acknowledgments}
The authors thank Supriya Krishnamurthy for the many inspiring discussions and Ingemar Bengtsson for suggesting improvements to the text.

\onecolumn
\newpage

\appendix

\section{Proof of Proposition \ref{incoherent}}\label{E}
Let $\rho$ be a passive or a maximally active state. Since the passive and the maximally active states are extremal for $\EE_A$, the differential of $\EE_A$ vanishes at $\rho$. Any tangent vector of $\S(\rho_i)$ at $\rho$ can be represented as $-i[B,\rho]$, where $B$ is some Hermitian operator. Conversely, any such $B$ represents a tangent vector of $\S(\rho_i)$ at $\rho$. We have that 
\begin{equation}
	\tr\big(B(-i[\rho,A])\big)
	= \tr\big(A(-i[B,\rho])\big)\\
	= d\EE_A\big(-i[B,\rho]\big)\\	
	= 0.
\end{equation}
Since this holds for every Hermitian $B$, $[\rho,A]=0$.

\section{Proof of Proposition \ref{saturates}}\label{S}

Suppose that $H(t)$ is a time-optimal Hamiltonian. Let $f(t)$ be a continuous and everywhere positive function such that $\tr(H(t)^2)\leq f(t)\leq \omega^2$. For $0\leq t\leq \taumin$, define
\begin{equation}
	\tau(t)=\frac{1}{\omega}\int_0^t \ds \sqrt{f(s)}.
\end{equation}
Since $f$ is continuous and positive, $\tau(t)$ is differentiable and monotonically increasing. Thus $\tau(t)$ is invertible and has a differentiable inverse $t(\tau)$. The domain of $t(\tau)$ is the interval $[0,\tau(\taumin)]$. At any $\tau$ in this interval,
\begin{equation}\label{uppskattning}
	\frac{\d t}{\d \tau}(\tau) = \frac{\omega}{\sqrt{f(t(\tau))}}.
\end{equation}

Define $\varrho(\tau)=\rho(t(\tau))$ and set $K(\tau)=\frac{\d t}{\d \tau}(\tau)H(t(\tau))$. The curve of states $\varrho(\tau)$ emanates from $\rho_i$, arrives at $\rho(\taumin)$ at the time $\tau(\taumin)$, and is generated by $K(\tau)$:
\begin{equation}
		\frac{\d \varrho}{\d \tau}(\tau)
		= -i[H(t(\tau)),\rho(t(\tau))]\frac{\d t}{\d \tau}(\tau)\\
		= -i[K(\tau),\varrho(\tau)].
\end{equation}
$K(\tau)$ satisfies the bounded bandwidth condition: 
\begin{equation}
	\tr\big(K(\tau)^2\big) = \frac{\omega^2}{f(t(\tau))}\tr\big(H(t(\tau))^2\big) \leq \omega^2.
\end{equation}1
If $\tr(H(t)^2)<\omega^2$ for some $t$, we can choose $f$ to be strictly less than $\omega^2$ in a neighborhood of that $t$. But then 
\begin{equation}
	\tau(\taumin)=\frac{1}{\omega}\int_0^{\taumin} \ds \sqrt{f(s)}<\taumin.
\end{equation}
This contradicts the assumption that $\taumin$ lower bounds the transformation time. Hence $\tr(H(t)^2)=\omega^2$ for all $t$.

\section{Proof of Proposition \ref{shortest}}\label{U}
According to Proposition \ref{saturates}, we need to show that no smooth curve $U(t)$ that extends from $\1$ to $\P(\rho_i)$ has a length shorter than $\omega\taumin$. Reparameterize $U(t)$ to have constant speed $\omega$ and let $\tau$ be the time when $U(t)$ arrives at $\P(\rho_i)$. The Hamiltonian $H(t)=-i\dot U(t)U(t)^\dagger$ generates $U(t)$ and satisfies the condition \eqref{bbcondition}. But then, by the definition of the passivization time, $\tau\geq\taumin$. The length of $U(t)$, which is $\omega\tau$, is therefore not less than $\omega\taumin$.

\section{The principal logarithm}\label{L}
The principal logarithm assigns a skew-Hermitian operator to each unitary operator on $\H$. It is defined as follows. Let $U$ be a unitary operator on $\H$ and let $e^{i\theta_1}, e^{i\theta_2},\dots,e^{i\theta_n}$ be the eigenvalues of $U$ with corresponding eigenvectors $\ket{\psi_1},\ket{\psi_2},\dots,\ket{\psi_n}$. Choose the phases $\theta_j$ in \emph{the principal branch} $(-\pi,\pi]$. Then
\begin{equation}
	\Log U=\sum_{k=1}^n i\theta_k\ketbra{\psi_k}{\psi_k}.
\end{equation}
The principal logarithm satisfies $e^{\Log U}=U$. But for a skew-Hermitian operator $S$, $\Log(e^{S})=S$ if, and only if, the imaginary parts of the eigenvalues of $S$ belong to the principal branch.

\section{Derivation of Equation \ref{logdistance}}\label{G}
Since the metric is bi-invariant, so is the geodesic distance. Therefore, it suffices to show that $\dist(\1,U)=\|\Log U\|$ or, which is equivalent, that $e^{-it\Log U}$ is a shortest geodesic from $\1$ to $U$. Let $e^{i\theta_1},e^{i\theta_2},\dots,e^{i\theta_n}$ be the eigenvalues of $U$ and select the phases of these in the principal branch $(-\pi,\pi]$. Also, let $e^{-itH}$ be any geodesic from $\1$ to $U$ such that $U=e^{-iH}$. Write $\epsilon_1,\epsilon_2,\dots,\epsilon_n$ for the eigenvalues of $H$. Possibly after a re-indexing, $e^{-i\epsilon_j}=e^{i\theta_j}$ and, hence,  $\epsilon_j=-\theta_j\bmod{2\pi}$. This implies that $\theta_j^2\leq \epsilon_j^2$. The length of $e^{-itH}$ equals $\|H\|$, and the length of $e^{-it\Log U}$ equals $\|\Log U\|$. The inequality 
\begin{equation}
	\|\Log U\|^2=\sum_{j=1}^n \theta_j^2\leq \sum_{j=1}^n \epsilon_j^2 = \|H\|^2
\end{equation}
shows that $e^{-it\Log U}$ is not longer than $e^{-itH}$.

\section{Proof of Proposition \ref{passivizing set}}\label{P}
That $UPV$ belongs to $\P(\rho_i)$ for each $U$ in $\U(\H)_A$ and each $V$ in $\U(\H)_{\rho_i}$ follows from $\U(\H)_A$ leaving the set of passive states invariant. Conversely, that every passivizing unitary has the form  $UPV$ for some $U$ in $\U(\H)_A$ and $V$ in $\U(\H)_{\rho_i}$ follows from $\U(\H)_A$ acting transitively on the set of passive states. Indeed, if $W$ is any passivizing unitary, then $W\rho_i W^\dagger=UP\rho_i P^\dagger U^\dagger$ for some $U$ in $\U(\H)_A$ due to the transitivity of the action of $\U(\H)_A$. But then $P^\dagger U^\dagger W\rho_i=\rho_i P^\dagger U^\dagger W$ which shows that $W=UPV$ for some $V$ in $\U(\H)_{\rho_i}$. 

That $\P(\rho_i)$ is a submanifold of $\U(\H)$ follows from $\P(\rho_i)$ being the orbit of $P$ under the action $(U,V)\cdot W=UWV^\dagger$ of $\U(\H)_A \times \U(\H)_{\rho_i}$ on $\U(\H)$: Since the product of the isotropy groups is compact, the action is proper and, hence, the orbits are submanifolds of $\U(\H)$.

\section{Flag manifolds}\label{flag}
Let $\U(n)$ be the group of unitary $n\times n$ matrices and let $n_1,n_2,\dots, n_l$ be any sequence of positive integers that add up to $n$. Write $\U(n_1,n_2,\dots, n_l)$ for the subgroup of $\U(n)$ consisting of the block-diagonal matrices whose blocks have dimensions $n_1\times n_1, n_2\times n_2,\dots,n_l\times n_l$. Let $\U(n_1,n_2,\dots, n_l)$ act from the right on $\U(n)$ by matrix multiplication. The quotient manifold $\U(n)/\U(n_1,n_2,\dots, n_l)$ is called a generalized flag manifold \cite{Ar2003}. Grassmann manifolds $\U(n)/\U(n_1,n_2)$ and the projective Hilbert space $\U(n)/\U(1,n-1)$ are special cases. A flag manifold is a generalized flag manifold for which all the $n_j$s equal $1$.

We write $[U]$ for the coset of a unitary matrix $U$. The coset is the element of the generalized flag manifold which is represented by $U$. The quotient projection $\p(U)=[U]$ is a principal fiber bundle with fiber $\U(n_1,n_2,\dots, n_l)$; see \cite{KoNo1996}. We equip $\U(n)$ with the bi-invariant metric $g$ that at the Lie algebra is given by $g(X,Y)=-\tr(XY)$, and we equip $\U(n)$ with the connection form whose kernel at a $U$ is the orthogonal complement of the tangent space of the fiber $\p^{-1}([U])$ at $U$. Then there is a unique metric on  $\U(n)/\U(n_1,n_2,\dots,n_l)$ which turns $\p$ into a Riemannian submersion \cite{Mo2002}. 

The geodesic distance between $[U]$ and $[V]$ equals the length of the shortest curve in $\U(n)$ connecting the fibers $\p^{-1}([U])$ and $\p^{-1}([V])$. For Grassmann manifolds, there is a formula for the geodesic distance: Let $s_1,s_2,\dots,s_n$ be the singular values of $U^\dagger V$. Then
\begin{equation}\label{Gdistance}
	\dist\big([U],[V]\big) = \sqrt{ 2 \sum_{i=1}^{n} \big(\arccos\sqrt{s_i}\,\big)^2},
\end{equation}
see \cite{EdArSm1998}. For generalized flag manifolds beyond Grassmann manifolds, no formula for the geodesic distance is known. However, an algorithm for numerically calculating the geodesic distance has been proposed recently~\cite{MaKiPe2020}.

\end{document}